  \def\cC{{\mathcal{C}}}
 \def\cN{{\mathcal{N}}} \def\cO{{\mathcal{O}}}
\def\ba{{\mathbf{a}}}    \def\be{{\mathbf{e}}}
\def\bff{{\mathbf{f}}}    
   \def\bn{{\mathbf{n}}} 
 \def\bq{{\mathbf{q}}} \def\br{{\mathbf{r}}} \def\bs{{\mathbf{s}}} 
    \def\by{{\mathbf{y}}}
 \def\bh{\mathbf{h}}
\def\bA{{\mathbf{A}}} \def\bB{{\mathbf{B}}}   
\def\bF{{\mathbf{F}}}  \def\bH{{\mathbf{H}}} \def\bI{{\mathbf{I}}} 
  \def\bM{{\mathbf{M}}} \def\bN{{\mathbf{N}}} 
 \def\bQ{{\mathbf{Q}}} \def\bR{{\mathbf{R}}} \def\bS{{\mathbf{S}}} 
\def\bU{{\mathbf{U}}} \def\bV{{\mathbf{V}}} \def\bW{{\mathbf{W}}} \def\bX{{\mathbf{X}}} \def\bY{{\mathbf{Y}}}
\def\argmin{\mathop{\mathrm{argmin}}}
\def\argmax{\mathop{\mathrm{argmax}}}
\def\tr{\mathop{\mathrm{tr}}}
\def\rank{\mathop{\mathrm{rank}}}
\def\vec{\mathop{\mathrm{vec}}}
\def\diag{\mathop{\mathrm{diag}}}
     \def\d4{\!\!\!\!}
\def\bSig{\mathbf{\Sigma}}
  \def\R{{\mathbb{R}}} \def\C{{\mathbb{C}}}   \def\E{{\mathbb{E}}}
  \def\la{\left|}     \def\ra{\right|}    \def\lA{\left\|}     \def\rA{\right\|}
  \def\sig{\sigma}
  \def\-{\! - \!}  \def\+{\! + \!}  \def\={\! = \!}  \def\>{\! > \!}
\def \log{\mathrm{log}}
\newtheorem{proposition}{Proposition}
\newtheorem{theorem}{Theorem}
\newtheorem{lemma}{Lemma}
\newtheorem{remark}{Remark}
\newcommand{\bef}{\begin{figure}}
\newcommand{\eef}{\end{figure}}
\newcommand{\beq}{\begin{eqnarray}}
\newcommand{\eeq}{\end{eqnarray}}
\newenvironment{proof}[1][Proof]{\begin{trivlist}
\item[\hskip \labelsep {\bfseries #1}]}{\end{trivlist}}
\newcommand{\qed}{\nobreak \ifvmode \relax \else
\ifdim\lastskip<1.5em \hskip-\lastskip \hskip1.5em plus0em
minus0.5em \fi \nobreak \vrule height0.5em width0.5em
depth0.25em\fi}
\begin{document}
\begin{spacing}{1.0}
\pagenumbering{arabic}

\title{A Sequential Subspace Method for Millimeter Wave MIMO Channel Estimation}
\author{Wei Zhang, \IEEEmembership{Student Member,~IEEE}, Taejoon Kim, \IEEEmembership{Senior Member,~IEEE}, and Shu-Hung Leung
\thanks{

{W. Zhang is with the Department of Electrical Engineering, City University of Hong Kong, Hong Kong SAR, China (e-mail: wzhang237-c@my.cityu.edu.hk).}

{T. Kim is with the Department of Electrical Engineering and Computer Science, The University of Kansas, KS 66045, USA (e-mail: taejoonkim@ku.edu).}

{S.-H. Leung is with State Key
Laboratory of Terahertz and Millimeter Waves and Department of Electrical
Engineering, City University of Hong Kong, Hong Kong SAR, China
(e-mail: eeeugshl@cityu.edu.hk).}
}
}
\maketitle
\begin{abstract}
Data transmission over the millimeter wave (mmWave) in fifth-generation wireless networks aims to support
very high speed wireless communications.
A substantial increase
in spectrum efficiency for mmWave transmission can be achieved by using advanced hybrid analog-digital precoding, for which accurate channel state information (CSI) is the key. Rather than estimating the entire channel matrix, it is now well-understood that directly estimating subspace information, which contains fewer parameters, does have enough information to design transceivers. However, the large channel use overhead and associated computational complexity in the existing channel subspace estimation techniques are major obstacles to deploy
the subspace approach for channel estimation.
In this paper, we propose a sequential two-stage subspace estimation method that can resolve the overhead issues and provide accurate subspace information.
Utilizing a sequential method enables us to avoid manipulating the entire high-dimensional training signal, which greatly reduces the computational complexity.
Specifically, in the first stage, the proposed method samples the columns of channel matrix to estimate its column subspace. Then, based on the obtained column subspace, it optimizes the training signals to estimate the row subspace.
For a channel with $N_r$ receive antennas and $N_t$ transmit antennas, our analysis shows that the proposed technique only requires $\mathcal{O}(N_t)$ channel uses, while providing a guarantee of subspace estimation accuracy.
By theoretical analysis, it is shown that the similarity between the estimated subspace and the true subspace is linearly related to the signal-to-noise ratio (SNR), i.e., $\mathcal{O}(\text{SNR})$, at high SNR, while quadratically related to the SNR, i.e.,  $\mathcal{O}(\text{SNR}^2)$, at low SNR.
Simulation results show that the proposed sequential
subspace method can provide improved subspace accuracy, normalized mean squared error, and spectrum efficiency over existing
methods.

\end{abstract}

\begin{IEEEkeywords}
Channel estimation, compressed sensing, millimeter wave communication, multi-input multi-output, subspace estimation.
\end{IEEEkeywords}

\maketitle

\section{Introduction}
\label{sec:introduction}
Wireless communications using the millimeter wave
(mmWave), which occupies the frequency band (30--300 GHz),
address the current scarcity of wireless broadband spectrum
and enable high speed transmission in fifth-generation (5G)
wireless networks \cite{rappaport}.
Due to the short wavelength, it is possible to employ large-scale antenna arrays with small-form-factor \cite{heath2016,Torkildson2011, Hur13}.
To reduce power consumption and hardware complexity, the mmWave systems exploit hybrid analog-digital multiple-input multiple-output (MIMO) architecture operating with a limited number of radio frequency (RF) chains \cite{heath2016}.
Under the perfect channel state information (CSI), {it has been shown that hybrid precoding can achieve nearly optimal performance as fully-digital precoding \cite{heath2016,Torkildson2011,spatially}.}
In practice, accurate CSI must be estimated via channel training in order to have effective precoding for robust mmWave MIMO transmission.
However, extracting accurate CSI in the mmWave MIMO poses new challenges  due to the limited number of RF chains that limits the observability of the channel and  greatly increases the channel use overhead.
\parskip=0pt

To reduce the channel use overhead, initial works focused on the beam alignment techniques \cite{Wang09,BeamSteer} utilizing beam search codebooks.
By exploiting the fact that mmWave propagation exhibits low-rank characteristic, recent researches formulated the channel estimation task as a {sparse signal reconstruction problem} \cite{OMPchannel,SBR_channel} {and} {low-rank matrix reconstruction problem} \cite{ZhangSD,AlternatingMin,recht,jointSparse,zhangSP, zhangSparse}.
By using the knowledge of sparse signal reconstruction, orthogonal matching pursuit (OMP) \cite{OMPchannel} and sparse
Bayesian learning (SBL) \cite{SBR_channel} were motivated to estimate the sparse mmWave channel in angular domain.
Alternatively, if the channel is rank-sparse, it is possible to directly
extract sufficient channel subspace information for the
precoder design \cite{hadi2015, ZhangSD,AlternatingMin}.
These subspace-based methods employ the Arnoldi iteration \cite{hadi2015} to estimate the channel subspaces and knowledge of matrix completion  \cite{ZhangSD, AlternatingMin} to estimate the low-rank mmWave channel information.

Though the sparse signal reconstruction \cite{OMPchannel,SBR_channel} and matrix completion \cite{ZhangSD,AlternatingMin} techniques can reduce the channel use overhead compared to traditional beam alignment techniques, the training sounders of these techniques \cite{OMPchannel,SBR_channel,ZhangSD,AlternatingMin} are pre-designed and high-dimensional, which leads to the fact that these works suffer from explosive computational complexity as the size of arrays grows.
To reduce the computational complexity, the adaptive training techniques have been investigated in \cite{Hur13,hadi2015,alk}, where the training sounders can be adaptively designed based on the feedback or two-way training.
But these adaptive training techniques could not guarantee the performance on mean squared error (MSE) and/or subspace estimation accuracy. {Moreover}, the techniques provided in \cite{Hur13,hadi2015,alk} will introduce additional channel use overhead due to the required feedback and two-way training.

To resolve the feedback overhead and maintain the benefit of adaptive training, in this paper, we present a two-stage subspace estimation approach, which sequentially estimates the column and row subspaces of the mmWave MIMO channel.
Compared to the existing channel estimation techniques in \cite{OMPchannel,SBR_channel,ZhangSD,AlternatingMin}, the training sounders of the proposed approach are adaptively designed to reduce the channel use overhead and computational complexity.
Moreover, the proposed approach is open-loop, thus it has no requirements of feedback and two-way channel sounding compared to priori adaptive training techniques \cite{Hur13,hadi2015,alk}.
The main contributions of this paper are described as follows:
\begin{itemize}
  \item We propose a two-stage subspace estimation technique called a sequential and adaptive subspace estimation (SASE) method.
      In the channel estimation of the proposed SASE,
the column and row subspaces are estimated sequentially.
Specifically, in the first stage, we sample a small fraction of columns of the channel matrix to obtain an estimate of the column subspace of the channel.
In the second stage,
the row subspace of the channel is estimated based on
the obtained column subspace.
In particular, by using the estimated column subspace obtained in the first stage, the receive training sounders of the second stage are optimized to reduce the number of channel uses.
Compared to the existing works with fixed training sounders, where the entire high-dimensional training signals are utilized to obtain the CSI, the proposed adaptation has the advantage that the dimension of signals being processed in each stage is much less than that of the entire training signal, greatly reducing the computational complexity.
Thus, the proposed SASE
has much less computational complexity than those of the existing methods.
  \item We analyze the subspace estimation accuracy, which guarantees the performance of the proposed SASE technique. Through extensive analysis,
       it is shown that the subspace estimation accuracy of the SASE is linearly related to
      the signal-to-noise ratio (SNR), i.e., $\mathcal{O}(\text{SNR})$, at high SNR, and quadratically related to the SNR, i.e., $\cO(\text{SNR}^2)$, at low SNR.
Moreover, simulation results show that the proposed SASE improves estimation accuracy over the prior arts.
\item
After obtaining the estimated column and row subspaces,
an efficient method is developed for estimating the high-dimensional
but low-rank channel matrix.
Specifically, given the subspaces estimated by the proposed SASE, the mmWave channel estimation task can be simplified to solving a low-dimensional least squares problem, whose computation is much lower. Simulation results show that the proposed channel estimation method has lower normalized mean squared error and higher spectrum efficiency than those of the existing methods.
\end{itemize}

This paper is organized as follows, in Section II, we introduce the mmWave MIMO system model. In Section III, the proposed SASE is developed and analyzed.
The channel use overhead, computational complexity, and an extension of the proposed SASE are discussed in Section IV.
 Finally, the simulation results and the conclusion remarks are provided in Sections V and VI, respectively.

\textit{Notation}: Bold small and captial letters denote vectors and
matrices, respectively.
$\bA^T,\bA^H, \bA^{\!-1}$, $| \bA  |$, $\| \bA \|_F$, $\tr(\bA)$,  and $\|\ba\|_2$ are, respectively,    the transpose, conjugate transpose, inverse, determinant, Frobenius norm, trace of $\bA$, and $l_2$-norm of $\ba$.
$[\bA]_{:,i}$, $[\bA]_{i,:}$, and $[\bA]_{i,j}$ are, respectively, the $i$th column, $i$th row, and $i$th row $j$th column entry of $\bA$.
$\vec(\bA)$ stacks the columns of $\bA$ and forms a column vector.
$\diag(\ba)$ denotes a square diagonal matrix with vector $\ba$ as the main diagonal.
$\sigma_L(\bA)$ denotes the $L$th largest singular value of matrix $\bA$.
$\bI_M \! \in \! \R^{M\times M}$ is the identity matrix.
{The $\mathbf{1}_{M,N} \! \in \! \R^{M\times N}$ , $\mathbf{0}_{M}\! \in \! \R^{M\times 1} , \mathbf{0}_{M,N} \! \in \! \R^{M\times N}$ are the all one matrix, zero vector, and zero matrix, respectively.}
$\mathop{\mathrm{col}}(\bA)$ denotes the column subspace spanned by the columns of matrix $\bA$. The operator $(\cdot)_+$ denotes $\max\{0,\cdot \}$. The operator $\otimes$ denotes the Kronecker product.

\section{MmWave MIMO System Model} \label{system section}

\begin{figure}[t]
\centering
\includegraphics[width=3.5in]{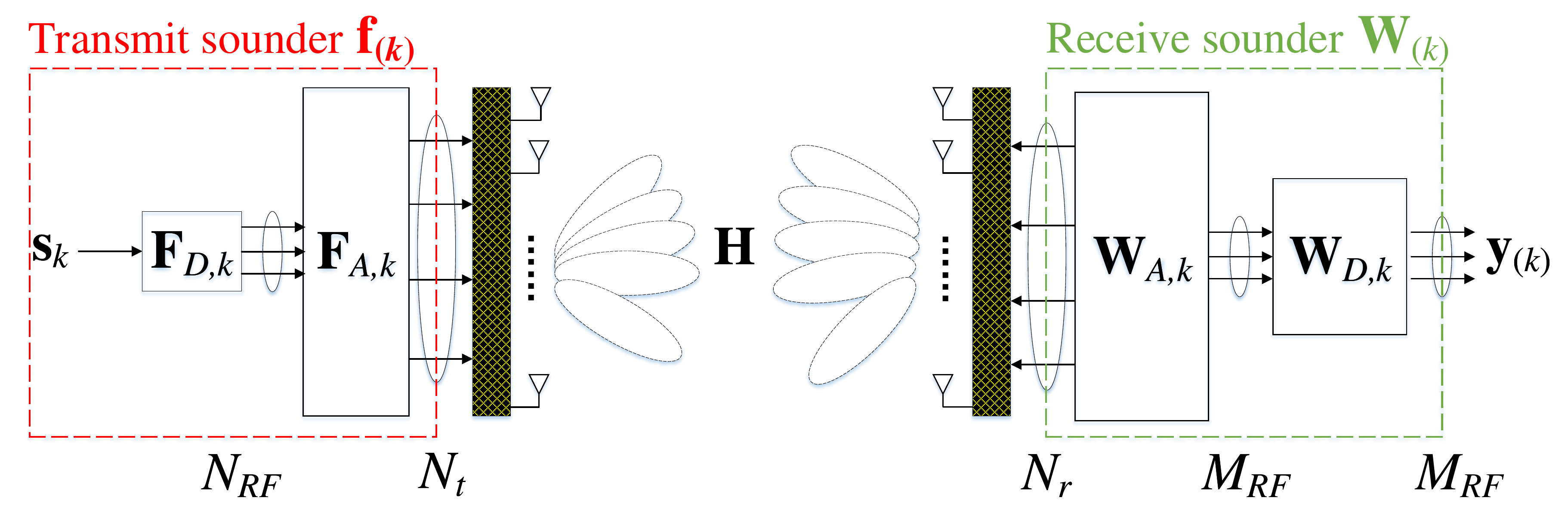}
\caption{The mmWave MIMO channel sounding model} \label{system}
\end{figure}

\subsection{Channel Sounding Model}

The mmWave MIMO channel sounding model is shown in Fig. \ref{system}, where the transmitter and receiver are equipped with $N_t$ and $N_r$ antennas, respectively. There are $N_{RF}\ge 2$ and $M_{RF} \ge 2$ RF chains at the transmitter and receiver, respectively.
Without loss of generality, we assume $N_t$ is an integer multiple of $N_{RF}$, and $N_r$ is also an integer multiple of $M_{RF}$.
In the considered mmWave channel sounding framework, one sounding symbol is transmitted over a unit time interval from the transmitter, which is defined as one channel use. It is assumed that the system employs $K$ channel uses for channel sounding.
The received signal $\by_{(k)} \in \C^{M_{RF} \times 1}$ at the $k$th channel use is given by
\beq
 \by_{(k)}= \bW_{(k)}^H\bH \bff_{(k)}+\bW_{(k)}^H \bn_{(k)},~k=1,\ldots,K, \label{training}
\eeq
where $ \bW_{(k)}\!\!= \!\! \bW_{A,k} \bW_{D,k} \in \C^{N_r \times M_{RF}}$ is the receive sounder composed of receive analog sounder $\bW_{A,k} \in \C^{N_r \times M_{RF}}$ and receive digital sounder $\bW_{D,k}\in \C^{M_{RF} \times M_{RF}}$ in series, $\bff_{(k)}\!\!= \!\! \bF_{A,k} \bF_{D,k} \bs_k \!\!\in\!\! \C^{N_t \times 1}$ is the transmit sounder composed of transmit analog sounder $\bF_{A,k} \in \C^{N_t \times N_{RF}}$ and transmit digital sounder $\bF_{D,k}\in \C^{N_{RF} \times N_{RF}}$ in series with transmitted sounding signal $\bs_k$, and $\bn_{(k)}\! \in \!\C^{{N_r} \times 1}$ is the noise.

Considering that the transmitted sounding signal $\bs_{k}$ is included in $\bff_{(k)}$, for convenience, we let $\bs_k \!\!= \!\!\frac{1}{\sqrt{N_{RF}}}[1, \ldots, 1]^T \in \R^{N_{RF}\times 1}$, which enables us to focus on the design of $\bF_{A,k}$ and $\bF_{D,k}$.
It is worth noting that the analog sounders are constrained to be constant modulus, that is, $|[\bW_{A,k}]_{i,j}|={1}/{\sqrt{N_r}}$, and $|[\bF_{A,k}]_{i,j}|={1}/{\sqrt{N_t}}, \forall i,j$. Without loss of generality, we assume the power of the transmit sounder is one, that is, $\| \bff_{(k)}\|_2^2 = 1$.
The noise $\bn_{(k)}$ is an independent zero mean complex Gaussian vector with covariance matrix  $\sig^2 \bI_{N_r}$.
Due to the unit  power of transmit sounder, we define the signal-to-noise-ratio (SNR) as $1/{\sigma^2}$.\footnote{Here, the SNR is the ratio of transmitted sounder's power to the noise's power, which is a common practice in the channel estimation literature \cite{Hur13,OMPchannel,SBR_channel,ZhangSD,hadi2015,alk}.} The details of designing the receive and transmit sounders for facilitating the channel estimation will be discussed in Section III.



{
To model the point-to-point sparse mmWave MIMO channel, we assume there are $L$ clusters with $L \ll \min\{N_r, N_t \}$, and each constitutes a propagation path. The channel model can be expressed as \cite{RobertOver, brady},
\beq
\bH = \sqrt{\frac{N_r N_t}{L}}\sum_{l=1}^{L} h_l \ba_r(\theta_{r,l}) \ba_t^H(\theta_{t,l}). \label{channel model}
\eeq
where $\ba_r(\theta_{r,l})\!  \in \!  \C^{N_r \! \times 1}\!$ and $\ba_t(\theta_{t,l}) \! \in  \!\C^{N_t \!     \times 1}\!$ are array response vectors of the uniform linear arrays (ULAs) at the receiver and transmitter, respectively. We extend it to the channel model with 2D uniform planar arrays (UPAs) in Section \ref{extension sec}. In particular, $\ba_r(\theta_{r,l})$ and $\ba_t(\theta_{t,l})$ are expressed as
\beq
\!\!\!\!&\ba_r(\theta_{r,l})\!=\!\frac{1}{\sqrt{N_r}}[ 1, e^{-j\frac{2\pi}{\lambda}d \sin \theta_{r,l}} , \!  \cdots \! ,e^{-j\frac{2\pi}{\lambda}d(N_r-1) \sin \theta_{r,l}}  ]^T,\nonumber\\
\!\!\!\!&\ba_t(\theta_{t,l})\!=\!\frac{1}{\sqrt{N_t}}[ 1, e^{-j\frac{2\pi}{\lambda}d \sin \theta_{t,l}} ,\! \cdots\! ,e^{-j\frac{2\pi}{\lambda}d(N_t-1) \sin \theta_{t,l}}  ]^T,\nonumber
\eeq
where $\lambda$ is the wavelength, $d  =  0.5\lambda$ is the antenna spacing, $\theta_{r,l}$ and $\theta_{t,l}$ are the angle of arrival (AoA) and angle of departure (AoD) of the $l$th path uniformly distributed in $[-\pi/2,\pi/2)$, respectively, and $h_l   \sim     \cC \cN(0,\sigma_{h,l}^2)$ is the complex gain of the $l$th path.

The channel model in \eqref{channel model} can be rewritten as
\beq
\bH=\bA_r \diag(\bh) \bA_t^H, \label{matrix expression}
\eeq
where $\!\!\bA_r\!\!=\!\![\ba_r(\theta_{r,1} ),\ldots,\ba_r(\theta_{r,L})] \!\!\in\!\! \C^{N_r \times L}$, $\bA_t\!\!=\!\![\ba_t(\theta_{t,1} ),\ldots,\ba_t(\theta_{t,L})]\!\in\!  \C^{N_t \times L}$}, and $\bh \!\!= \!\![h_1,\cdots,h_L]^T \!\!\in\!\! \C^{L \times 1}$. The channel estimation task is to obtain an estimate of $\bH$, i.e., $\widehat{\bH}$, from $\by_{(k)}$, $\bW_{(k)}$, and $\bff_{(k)}$, $k \!=\!\!1,\cdots,K$ in \eqref{training}.

\subsection{Performance Evaluation of Channel Estimation}
To evaluate the channel estimation performance, the achieved spectrum efficiency by utilizing the channel estimate $\widehat{\bH}$ is discussed in the following.
Conventionally, the precoder $\widehat{\bF}\! \in \! \C^{N_t \times N_d}$ and combiner $\widehat{\bW}\! \in \! \C^{N_r \times N_d}$ are designed, based on the estimated $\widehat{\bH}$, where $N_d$ is the number of transmitted data streams with $N_d\! \le \! \min\{ N_{RF},M_{RF} \}$.
Here, when evaluating the channel estimation performance, it is assumed the number of transmitted data streams is equal to the number of dominant paths, i.e., $N_d=L$.
After the  design of precoder and combiner, the received signal for the data transfer is given by
\beq
\by = \widehat{\bW}^H \bH \widehat{\bF} \bs + \widehat{\bW}^H\bn,  \label{receiver form}
\eeq
where the signal follows  $\bs\sim\cC\cN(\mathbf{0}_L, \frac{1}{L}\bI_{L})$ and $\bn \sim \cC \cN(\mathbf{0}_{N_r},\sigma^2 \bI_{N_r})$. It is worth noting that \eqref{receiver form} is for data transmission, while \eqref{training} is for channel sounding.
 The spectrum efficiency achieved by $\widehat{\bW}$ and $\widehat{\bF}$ in \eqref{receiver form} is defined in \cite{MIMO_capacity} as,
\beq
R = \log_2 \left|\bI_{L} + \frac{1}{\sig^2 L } \bR_n^{-1} \bH_e {\bH_e}^H \right| , \label{spectrum efficiency f}
\eeq
where $\bH_e \= \widehat{\bW}^H \bH \widehat{\bF} \in \C^{{L} \times {L}} $ and $\bR_n \=  \widehat{\bW}^H  \widehat{\bW}\in \C^{{L} \times {L}} $.
In this work, we assume that the precoder and combiner are unitary, such that $\widehat{\bW}^H\widehat{\bW}=\bI_{L}$ and $\widehat{\bF}^H\widehat{\bF}=\bI_{L}$. Under this assumption, we have $\bR_n  = \bI_{L}$ in \eqref{spectrum efficiency f}.

It is worth noting that the spectrum efficiency in \eqref{spectrum efficiency f} is invariant to the right rotations of the precoder and combiner, i.e., substituting $\widetilde{\bF} = \widehat{\bF} \bR_\bF$ and $\widetilde{\bW} = \widehat{\bW}\bR_\bW$ into \eqref{spectrum efficiency f}, where $\bR_\bF \in \C^{L \times L}$ and $\bR_\bW \in \C^{L \times L}$ are unitary matrices, does not change the spectrum efficiency. Thus, the $R$ in \eqref{spectrum efficiency f} is a function of subspaces spanned by the precoder and combiner, i.e., $\mathop{\mathrm{col}}(\widehat{\bF})$ and $\mathop{\mathrm{col}}(\widehat{\bW})$. Moreover, the highest spectrum efficiency can be achieved when $\mathop{\mathrm{col}}(\widehat{\bF})$ and $\mathop{\mathrm{col}}(\widehat{\bW})$ respectively equal to the row and column subspaces of $\bH$.

Apart from the spectrum efficiency achieved by the signal model in \eqref{receiver form}, we consider the effective SNR at the receiver,
\beq
\gamma = \frac{\| \widehat{\bW}^H \bH \widehat{\bF} \|_F^2}{\sigma^2 \| \widehat{\bW}\|_F^2}
= \frac{\| \widehat{\bW}^H \bH \widehat{\bF} \|_F^2}{\sigma^2 L}. \label{rec SNR}
\eeq
The received SNR $\gamma$ in \eqref{rec SNR} has the same rotation invariance property as the spectrum efficiency.
In other words, the $\gamma$  in \eqref{rec SNR} is a function of the estimated column and row subspaces.
The maximum of the $\gamma$ is also achieved when $\widehat{\bW}$ and $\widehat{\bF}$ span the column and row subspaces of $\bH$, respectively.

\begin{figure}[t]
\centering
\includegraphics[width=3.3in]{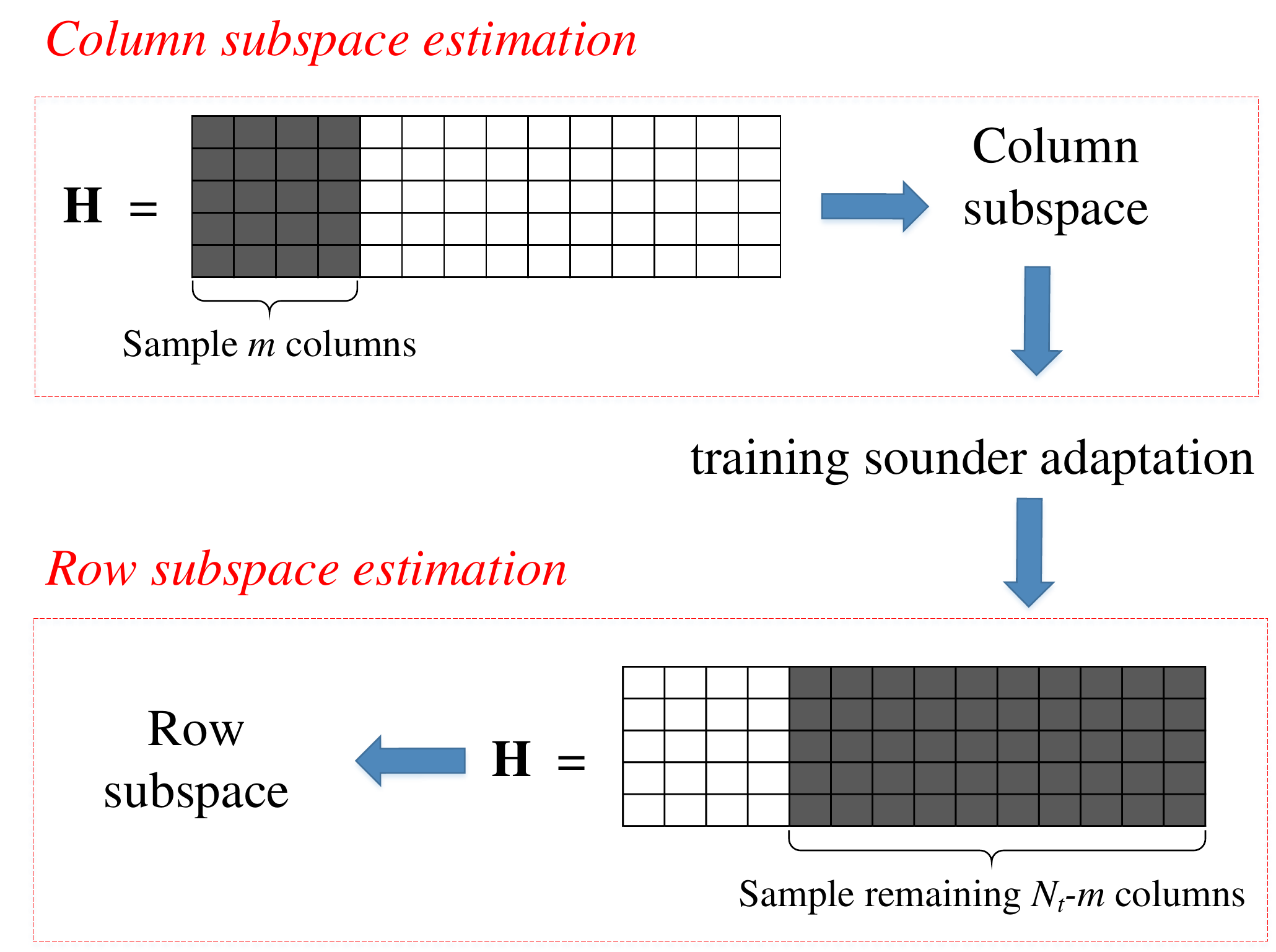}
\caption{Illustration of SASE Algorithm} \label{algorithm_diagram}
\end{figure}

Inspired by the definition in \eqref{rec SNR},
in this paper, the accuracy of subspace estimation is defined as the ratio of the power captured by the transceiver matrices \cite{mmvSubspace} $\widehat{\bW}$ and $\widehat{\bF}$ to the power of the channel,
\beq
\eta(\widehat{\bW} ,\widehat{\bF}) = \frac{\| \widehat{\bW}^H \bH \widehat{\bF} \|_F^2}{\tr(\bH^H \bH)}. \label{subspace metric}
\eeq
Similarly, the measures for the accuracy of column subspace and row subspace estimation, i.e., $\eta_c(\widehat{\bW})$ and $\eta_r(\widehat{\bF})$, are respectively defined as the ratio of the power captured by $\widehat{\bW}$ and $\widehat{\bF}$ to the power of the channel in the following,
\beq
\eta_c(\widehat{\bW}) &= &\frac{\tr( \widehat{\bW}^H \bH \bH^H \widehat{\bW})}{\tr(\bH^H \bH)}, \label{def eta c}\\
\eta_r(\widehat{\bF})& = &\frac{\tr( \widehat{\bF}^H \bH^H \bH \widehat{\bF})}{\tr(\bH^H \bH)}. \label{def eta r}
\eeq
Moreover, $\eta_c$ and $\eta_r$ are also rotation invariant.
When the values of $\eta_c$ or $\eta_r$ are closed to one, the corresponding  $\widehat{\bW}$ or $\widehat{\bF}$ can be treated accurate subspace estimates.

The illustration of the proposed SASE algorithm is shown in Fig. \ref{algorithm_diagram}.
It consists of two stages: one is column subspace estimation and the other is row subspace estimation.
In particular, the training sounders of the second stage can be optimized by fully adapting them to the estimated column subspace, which would reduce the number of channel uses and improve the estimation accuracy.


\section{Sequential and Adaptive Subspace Estimation}
\subsection{Estimate the Column Subspace} \label{Section_column}

In this subsection, we present the design of transmit and receive sounders along with the method for obtaining the column subspace of the mmWave channel.
{To begin with, the following lemma shows that under the mmWave channel model in (3), the column subspaces of $\bH$ and sub-matrix $\bH_S$ are equivalent.}

{
\begin{lemma}\label{lemma1}
Let $\bH_S = \bH\bS\in \C^{N_r \times m}$ be a sub-matrix that selects the first $m$ columns of $\bH$ with $m \ge L$, where $\bS $ is expressed as
\beq
\bS = \begin{bmatrix}
        \bI_{m} \\
        \mathbf{0}_{N_t-m , m}
      \end{bmatrix} \in \R^{N_t \times m}. \nonumber
\eeq
For the mmWave channel model in \eqref{matrix expression}, if
all the values of angles $\{\theta_{t,l} \}_{l=1}^L$ and $\{\theta_{r,l} \}_{l=1}^L$ are distinct, the column subspaces of $\bH$ and $\bH_S$ will be equivalent, i.e.,
$\mathop{\mathrm{col}}(\bH_S) = \mathop{\mathrm{col}}(\bH)$.
\end{lemma}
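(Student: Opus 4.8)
The plan is to prove the chain of equalities $\col(\bH_S) = \col(\bA_r) = \col(\bH)$, where $\bA_r$ is the receive steering matrix in the factorization \eqref{matrix expression}. The guiding observation is that both $\bH$ and $\bH_S$ are obtained by multiplying $\bA_r$ on the right by some matrix, so their column spaces are automatically contained in $\col(\bA_r)$; it then suffices to verify that these right factors have full row rank $L$, which upgrades the inclusions to equalities by a dimension count.

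First I would establish $\col(\bH) = \col(\bA_r)$. Because the antenna spacing is $d = \lambda/2$, the receive array response vectors $\ba_r(\theta_{r,l})$ are the columns of a Vandermonde matrix whose nodes $e^{-j\pi\sin\theta_{r,l}}$ are pairwise distinct whenever the $\{\theta_{r,l}\}$ are distinct (the map $\theta \mapsto \sin\theta$ is injective on $[-\pi/2,\pi/2)$). Since $L \le N_r$, this makes $\bA_r$ of full column rank $L$, so $\dim\col(\bA_r) = L$. Writing $\bH = \bA_r\big(\diag(\bh)\bA_t^H\big)$ and noting that $\diag(\bh)$ is invertible (the path gains $h_l$ are nonzero almost surely under the model) while $\bA_t^H$ has full row rank $L$ by the same Vandermonde argument applied to the distinct $\{\theta_{t,l}\}$, the right factor $\diag(\bh)\bA_t^H$ has rank $L$. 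Hence $\bH = \bA_r M$ with $M \in \C^{L \times N_t}$ of rank $L$, and $\col(\bA_r M) \subseteq \col(\bA_r)$ together with $\rank(\bA_r M) = L = \dim\col(\bA_r)$ yields $\col(\bH) = \col(\bA_r)$.

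Next I would repeat the argument for $\bH_S = \bH\bS = \bA_r\diag(\bh)(\bA_t^H\bS)$. Here $\bA_t^H\bS \in \C^{L \times m}$ is exactly the conjugate transpose of the submatrix formed by the first $m$ rows of $\bA_t$. The key step is to show this submatrix has full column rank $L$: its entries again have Vandermonde structure in the distinct nodes $e^{-j\pi\sin\theta_{t,l}}$, and because $m \ge L$ a tall $m \times L$ Vandermonde matrix with distinct nodes has rank $L$. Consequently $\diag(\bh)(\bA_t^H\bS)$ is an $L \times m$ matrix of rank $L$, and the same inclusion-plus-dimension-count as before gives $\col(\bH_S) = \col(\bA_r)$. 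Combining the two paragraphs yields $\col(\bH_S) = \col(\bH)$, as claimed.

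The main obstacle is the full-rank claim for the truncated steering matrices, i.e., that selecting only the first $m \ge L$ rows of $\bA_t$ (respectively all rows of $\bA_r$) preserves column rank $L$. This reduces to the standard fact that a Vandermonde matrix built from distinct nodes has full rank once it has at least as many rows as columns; the only thing to check carefully is that the physical assumption of distinct AoAs/AoDs indeed produces distinct nodes on the unit circle, which follows from injectivity of $\sin$ on $[-\pi/2,\pi/2)$ and the half-wavelength spacing. I would also flag that the argument implicitly uses $h_l \ne 0$ for every $l$, which holds almost surely for the Gaussian path gains in the model.
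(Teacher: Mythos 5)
Your proof is correct, and it takes a somewhat different route from the paper's, although both hinge on the identical key fact. The paper works with the same factorization $\bH_S = (\bA_r\diag(\bh))(\bA_t^H\bS)$ but argues via Sylvester's rank inequality: $\rank(\bH_S) \ge \rank(\bA_r\diag(\bh)) + \rank(\bA_t^H\bS) - L = \rank(\bA_t^H\bS) = L$, where the last equality is exactly the truncated-Vandermonde fact you isolate; combined with the upper bound $\rank(\bH_S)\le\rank(\bH)=L$ and the automatic containment $\col(\bH_S)\subseteq\col(\bH)$, this gives the claim. You never invoke Sylvester: instead you identify $\col(\bA_r)$ as the common subspace and show that each of $\col(\bH)$ and $\col(\bH_S)$ equals it by an inclusion-plus-dimension count, using that a full-column-rank left factor preserves the rank of the right factor. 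What your packaging buys is transparency on two points the paper leaves implicit: first, that $\diag(\bh)$ must be invertible, i.e. $h_l\neq 0$ for every $l$ (the paper's step $\rank(\bA_r\diag(\bh)) = L$ silently needs this too, and as you note it holds almost surely for the Gaussian gains); second, that passing from the rank equality $\rank(\bH_S)=L$ to the subspace equality $\col(\bH_S)=\col(\bH)$ requires a containment between the two subspaces, which your chain through $\col(\bA_r)$ delivers directly rather than by appeal to the unstated inclusion $\col(\bH_S)\subseteq\col(\bH)$. The paper's version is shorter; yours is more self-contained and makes every hypothesis visible.
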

}
{
\begin{proof}
See Appendix \ref{appendix1}. \hfill \qed
\end{proof}
}

\begin{remark}
Because  $\{\theta_{t,l} \}_{l=1}^L$ and $\{\theta_{r,l} \}_{l=1}^L$ are continuous random variables (r.v.s) in $[-\pi/2,\pi/2)$, hence, they are distinct almost surely (i.e., with probability $1$).
\end{remark}

Lemma \ref{lemma1} reveals that when $\mathop{\mathrm{col}}(\bH_S) = \mathop{\mathrm{col}}(\bH)$, to obtain the column subspace of $\bH$,
it suffices to sample the first $m$ columns of $\bH$, i.e., $\bH_S$, which reduces the number of channel uses.
However, the mmWave hybrid MIMO architecture can not directly access the entries of $\bH$ due to the analog array constraints. This can be overcome by adopting the technique proposed in \cite{hadi2015}.
Specifically, to sample the $i$th column of $\bH$, i.e., $[\bH]_{:,i}$, the transmitter needs to construct the transmit sounder {$\bff_{(i)} =  \be_i   \in   \C^{N_t \times 1}$}, where $\be_i$ is the $i$th column of $\bI_{N_t}$.
{This is possible due to the fact that any precoder vector can be generated by $N_{RF} \ge 2$ RF chains \cite{xzhang}. To be more specific, there exists $\bF_{A,i}$, $\bF_{D,i}$, and $\bs_i$ such that $\be_{i}=\bF_{A,i} \bF_{D,i}\bs_i$,
\beq
\be_{i} \!\! = \!\! \underbrace{\frac{1}{\sqrt{N_t}} \!\!\begin{bmatrix}
            1 & 1 \!\!\!& \cdots\\
            \vdots & \vdots   \!\!\!& \cdots \\
            1 & 1   \!\!\! &\cdots\\
            1 & -1    \!\!\!& \cdots\\
            1 & 1    \!\!\!&\cdots\\
            \vdots & \vdots  \!\!\!& \cdots \\
            1 & 1 \!\!\! &\cdots
          \end{bmatrix}}_{\triangleq \bF_{A,i} }
          \underbrace{\begin{bmatrix}
            \frac{\sqrt{N_{RF} N_t}}{2}\!\!\!\!\!\! & 0 \!\!\!& \cdots \!\!\!& 0 \\
            -\frac{\sqrt{N_{RF} N_t}}{2}\!\!\!\!\!\!& 0 \!\!\!& \cdots \!\!\!& 0 \\
            0 \!\!\!\!\!\!& 0\!\!\! & \cdots \!\!\!& 0 \\
            \vdots\!\!\! \!\!\!& \vdots\!\!\! & \ddots \!\!\!& \vdots \\
            0\!\!\!\!\!\! & 0 \!\!\!& \cdots \!\!\!& 0
          \end{bmatrix} }_{\triangleq \bF_{D,i}}
             \underbrace{       \frac{1}{\sqrt{N_{RF}}}\begin{bmatrix}
            1  \\
            1  \\
            \vdots\\
           1
          \end{bmatrix}}_{\triangleq \bs_i}, \nonumber
\eeq
where $\bF_{A,i} \!\!= \!\!\frac{1}{\sqrt{N_t}}\mathbf{1}_{N_t, N_{RF}}$ except for $[\bF_{A,i}]_{i,2}\!\!=\!\!-\frac{1}{\sqrt{N_t}}$,
the $\bF_{D,i} \!\!=\!\! \mathbf{0}_{N_{RF}, N_{RF}}$ except for $[\bF_{D,i}]_{1,1}\!\!=\!\!\frac{\sqrt{N_{RF}N_t}}{2},[\bF_{D,i}]_{2,1}\!\!=\!\!-\frac{\sqrt{N_{RF}N_t}}{2}$, and $\bs_i \!\!=\!\! \frac{1}{\sqrt{N_{RF}}}[1, \ldots, 1]^T \in \R^{N_{RF}\times 1}$.}

At the receiver side, we collect the receive sounders of $N_r/M_{RF}$  channel uses to form the full-rank matrix,
\beq
\bM = [\bW_{(i,1)}, \bW_{(i,2)},\cdots,\bW_{(i,N_r/M_{RF})}] \in \C^{N_r\times N_r}, \label{Wij expression}
\eeq
where  $\bW_{(i,j)}\in \C^{N_r \times M_{RF}},~ j=1, \ldots ,N_r/M_{RF}$, denotes
the $j$th receive sounder corresponding to transmit sounder $\be_i$.
In order to satisfy the  analog constraint where the entries in analog sounders should be constant modulus, we let the matrix $\bM$ in \eqref{Wij expression} be the discrete Fourier transform (DFT) matrix.
Specifically, the analog and digital receive sounders associated with $\bW_{(i,j)}$ in \eqref{Wij expression} are expressed as follows
\beq
\bW_{(i,j)} =  \underbrace{[\bM]_{:, (j-1)M_{RF}+1:jM_{RF}}}_\text{analog sounder} \underbrace{\bI_{M_{RF}}}_\text{digital sounder}. \nonumber
\eeq

Thus, the received signal $ \by_{(i,j)} \!  \in  \!  \C^{M_{RF}   \times  1}$ under the transmit sounder $\be_i$ and receive sounder $\bW_{\!(i,j)}$ is expressed as follows
\beq
\by_{(i,j)} =\bW_{(i,j)}^H \bH \be_i + \bW_{(i,j)}^H\bn_{(i,j)} , \nonumber
\eeq
where $\bn_{(i,j) }\in \C^{N_r \times 1}$ is the noise vector with $\bn_{(i,j)} \sim \cC \cN(\boldsymbol{0}_{N_r},\sigma^2\bI_{N_r})$.
Then we stack the observations of  $N_r/M_{RF}$  channel uses as $\by_{i} = [\by_{(i,1)}^T, \cdots,\by^T_{(i,N_r/M_{RF})} ]^T \in \C^{N_r \times 1}$,
\beq
\underbrace{ \left[
\begin{matrix}
  \by_{(i,1)} \\
 \by_{(i,2)} \\
  \vdots \\
  \by_{(i,\frac{N_r}{M_{RF}})}
\end{matrix}
\right]}_{\triangleq \by_i}
\!\!\!\! &=& \!\!\!\!
\underbrace{\left[
\begin{matrix}
  \bW_{(i,1)}^H \\
 \bW_{(i,2)}^H \\
  \vdots \\
  \bW_{(i,\frac{N_r}{M_{RF}})}^H
\end{matrix}
\right]}_{\triangleq \bM^H}
\underbrace{\bH \be_i}_{\triangleq [\bH]_{:,i}} \!\!+\!\!
\underbrace{\left[
\begin{matrix}
  \bW_{(i,1)}^H \bn_{(i,1)}\\
 \bW_{(i,2)}^H \bn_{(i,2)}\\
  \vdots \\
  \bW_{(i,\frac{N_r}{M_{RF}})}^H\bn_{(i,\frac{N_r}{M_{RF}})}
\end{matrix}
\right]}_{\triangleq \tilde{\bn}_i}  \nonumber\\
\!\!\!\! &=& \!\!\!\!
\bM^H
[\bH]_{:,i} +
\tilde{\bn}_i \label{H1 observation},
\eeq
where $\tilde{\bn}_i \in \C^{N_r \times 1}$ is the effective noise vector after stacking, whose covariance matrix is expressed as,
\beq
\!\!\!\!\!\!\E[\tilde{\bn}_i \tilde{\bn}_i^H\!]\!\!\!\!\!\!  & =& \!\!\! \!\! \sig^2 \!\!
\!\begin{bmatrix} \!\!\!
 \! \bW_{(i,1)}^H\bW_{(i,1)}\!\!\!\! & \!\!\!\!\cdots\!\!\!\!&\!\! \bW_{(i,1)}^H\bW_{(i,\frac{N_r}{M_{RF}})}  \\
  \! \vdots \!\!\!\!& \!\!\!\!\ddots\!\!\!\! & \!\!\vdots  \\
\! \bW_{(i,\frac{N_r}{M_{RF}})}^H\bW_{(i,1)} \!\!\!\!& \!\!\!\!\cdots\!\! \!\!&\!\!\! \bW_{(i,\frac{N_r}{M_{RF}}\!\!)}^H\!\!\bW_{\!\!(i,\frac{N_r}{M_{RF}}\!\!)}
\!\!\!\end{bmatrix}\!\!.\label{cov effective n}
\eeq
Because the DFT matrix $\bM$ in \eqref{Wij expression} satisfies $\bM^H \bM = \bM \bM^H = \bI_{N_r}$, the following holds
\beq
\bW_{(i,j)}^H  \bW_{(i,k)} =
\left\{
\begin{array}{rcl}
\bI_{M_{RF}}       &      & {j      =     k},\\
\mathbf{0}_{M_{RF}}      &      & {j \neq k}.
\end{array} \right. \label{property cov}
\eeq
Substituting \eqref{property cov} into \eqref{cov effective n}, we can verify that $\E[\tilde{\bn}_i \tilde{\bn}_i^H] = \sigma^2 \bI_{N_r}$, and precisely, $\tilde{\bn}_i \sim \cC\cN(\mathbf{0}_{N_r},\sigma^2 \bI_{N_r})$. Moreover, by denoting $\widetilde{\bN} = [\tilde{\bn}_1,\cdots,\tilde{\bn}_m] \in \C^{N_r \times m}$, it is straightforward that the entries in $\widetilde{\bN}$ are independent, identically distributed (i.i.d.) as $\cC \cN(0, \sigma^2)$.
Here, for convenience, we denote $ \widetilde{\bY}_S  =  [\by_{1},\cdots, \by_{m}]  \in  \C^{N_r \times m}$ where $\by_i$ is defined in \eqref{H1 observation}.
Then, we apply DFT to the collected observation $\widetilde{\bY}_S$, and obtain $\bY_S = \bM\widetilde{\bY}_S  \in \C^{N_r \times m}$ as
\beq
\bY_S  =\bH_S + \bN_S, \label{colun observation}
\eeq
where $\bN_S \! =\! \bM \widetilde{\bN}  \in   \C^{N_r \times m}$ and $\bH_S\! =\! [\bH]_{:,1:m}  \in   \C^{N_r \times m}$.
Before talking about the noise part $\bN_S$ in \eqref{colun observation}, the following lemma is a preliminary which gives the distribution of entries in the product of matrices.

\begin{lemma} \label{lemma2}
Given a semi-unitary matrix $\bA\in \C^{d \times N}$ with $\bA \bA^H  = \bI_{d}$, and a random matrix $\bX \in \C^{N \times m}$ with i.i.d. entries of $\cC\cN(0,\sigma^2)$, the product $\bY = \bA \bX \in \C^{d \times m}$ also has i.i.d. entries with distribution of $\cC\cN(0,\sigma^2)$.
\begin{proof}
See Appendix \ref{lemma2_proof}. \hfill \qed
\end{proof}
\end{lemma}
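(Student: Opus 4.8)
The plan is to prove the result one column at a time and then assemble the columns, exploiting the fact that a linear transformation of a circularly symmetric complex Gaussian vector is again circularly symmetric complex Gaussian. First I would write $\bX = [\bx_1,\ldots,\bx_m]$ and $\bY = [\by_1,\ldots,\by_m]$ so that $\by_i = \bA\bx_i$ for each $i$. Since the entries of $\bX$ are i.i.d. $\cC\cN(0,\sig^2)$, each column satisfies $\bx_i \sim \cC\cN(\mathbf{0}_N, \sig^2\bI_N)$, and distinct columns are mutually independent. Because $\by_i$ depends only on $\bx_i$, the columns $\by_1,\ldots,\by_m$ are mutually independent as well, so it suffices to establish that each $\by_i \sim \cC\cN(\mathbf{0}_d, \sig^2\bI_d)$ with independent entries.

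Next I would fix an arbitrary column and argue Gaussianity. As $\by_i = \bA\bx_i$ is a deterministic linear map of the complex Gaussian vector $\bx_i$, the vector $\by_i$ is jointly complex Gaussian with zero mean. To pin down its law completely I would compute both second-order statistics: the Hermitian covariance $\E[\by_i\by_i^H] = \bA\,\E[\bx_i\bx_i^H]\,\bA^H = \sig^2\bA\bA^H = \sig^2\bI_d$, where the last equality is exactly the semi-unitarity hypothesis $\bA\bA^H = \bI_d$, and the pseudo-covariance $\E[\by_i\by_i^T] = \bA\,\E[\bx_i\bx_i^T]\,\bA^T = \mathbf{0}_{d,d}$, which vanishes because the circularly symmetric source $\bx_i$ has zero pseudo-covariance. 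Together these identify $\by_i$ as circularly symmetric complex Gaussian $\cC\cN(\mathbf{0}_d, \sig^2\bI_d)$.

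Finally I would convert the scalar structure of the covariance into independence. Since $\E[\by_i\by_i^H] = \sig^2\bI_d$, the entries of $\by_i$ are pairwise uncorrelated; for a circularly symmetric complex Gaussian vector, uncorrelatedness is equivalent to independence, so the $d$ entries of $\by_i$ are i.i.d. $\cC\cN(0,\sig^2)$. Combining the within-column independence just established with the across-column independence from the first step yields that all $dm$ entries of $\bY$ are i.i.d. $\cC\cN(0,\sig^2)$, as claimed.

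The step that demands the most care—rather than being genuinely hard—is the complex-Gaussian bookkeeping in the second paragraph: one must verify that the pseudo-covariance is zero, since for complex Gaussian vectors a diagonal Hermitian covariance alone does not guarantee independent entries unless circular symmetry is also confirmed. Once circular symmetry is in hand, the remaining implications (uncorrelated $\Rightarrow$ independent, and the column-wise assembly) are routine.
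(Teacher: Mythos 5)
Your proof is correct and follows essentially the same route as the paper's Appendix B: Gaussianity is preserved under the linear map, semi-unitarity ($\bA\bA^H=\bI_d$) yields the identity covariance and hence uncorrelated entries, and uncorrelated jointly Gaussian entries are independent. If anything, your version is slightly more careful than the paper's: the paper declares the identical marginal distribution ``trivial'' and checks only that the pairwise Hermitian correlations $\E\left[[\bY]_{i,m}[\bY]_{j,n}^*\right]$ vanish, whereas you additionally verify the zero pseudo-covariance $\E\left[\by_i\by_i^T\right]=\mathbf{0}_{d,d}$, which is exactly the circular-symmetry condition needed to make the ``uncorrelated $\Rightarrow$ independent'' step airtight for complex Gaussian vectors; your column-wise decomposition is otherwise just a cleaner bookkeeping of the same argument.
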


Therefore, considering the noise part in \eqref{colun observation}, i.e., $\bN_S = \bM\widetilde{\bN}$, where $\bM$ is unitary and $\widetilde{\bN}$ has i.i.d. $\cC \cN(0, \sigma^2)$ entries, the conclusion of Lemma \ref{lemma2} can be applied, which verifies that the entries of $\bN_S$ in \eqref{colun observation} are i.i.d. as $\cC \cN(0, \sigma^2)$.

Given the expression in \eqref{colun observation}, the column subspace estimation problem is formulated as,
\vspace{-2.5pt}
\beq
\widehat{\bU}= \argmax \limits _{\bU \in \C^{N_r \times L}}  \lA \bU^H \bY_S   \rA_F^2 ~\text{subject to} ~ \bU^H \bU  = \bI_L, \label{optimal column subpace 1}
\eeq
where one of the optimal solutions of \eqref{optimal column subpace 1} can be obtained by taking the dominant $L$ left singular vectors of $\bY_S$.
Here, the number of paths, $L$, is assumed to be known as a priori. In practice,
it is possible to estimate $L$ by comparing the singular values of $\bY_S$ \cite{eigenValueEst}. Because $\bY_S = \bH_S+\bN_S$ and $\rank(\bH_S)=L$, there will be $L$ singular values of $\bY_S$ whose magnitudes clearly dominate the other singular values. Alternatively, we can set it to $L_\text{sup}$, which is an upper bound on the number of dominant paths  such that $L\leq L_\text{sup}$.\footnote{Due to the limited RF chains, the dimension of channel subspaces for data transmission is less than $\min\{M_{RF}, N_{RF}\}$. Thus, if the path number estimate is larger than $\min\{M_{RF}, N_{RF}\}$, we let it be
$\min\{M_{RF}, N_{RF}\}$.}

Now, we design the receive combiner $\widehat{\bW}$ in \eqref{receiver form} for data transmission to approximate the estimated $\widehat{\bU} \in \C^{N_r \times L}$ in \eqref{optimal column subpace 1}. Specifically, we design the analog combiner $\widehat{\bW}_A \in \C^{N_r \times M_{RF}}$ and digital combiner $\widehat{\bW}_D \in \C^{M_{RF} \times L}$ at the receiver
by solving the following problem
\vspace{-2.5pt}
\beq
\left(\widehat{\bW}_A, \widehat{\bW}_D\right) = \argmin_{\bW_A,\bW_D} \| \widehat{\bU}-\bW_A\bW_D \|_F, \nonumber \\
\text{subject to~~} \la[\bW_A]_{i,j}\ra=\frac{1}{\sqrt{N_r}} .\label{receiver sounder}
\eeq
The problem above can be solved by using the OMP algorithm \cite{spatially} or alternating minimization method \cite{Alternating_Min}. The designed receive combiner is given by $\widehat{\bW} = \widehat{\bW}_A \widehat{\bW}_D \in \C^{N_r \times L}$ with $\widehat{\bW}^H\widehat{\bW}=\bI_L$. The methods in  \cite{spatially, Alternating_Min} have shown to guarantee  the near optimal performance, such as $\mathop{\mathrm{col}}(\widehat{\bW}) \approx \mathop{\mathrm{col}}(\widehat{\bU})$.
The details of our column subspace estimation algorithm are summarized in Algorithm \ref{alg_column}.

In general, $\mathop{\mathrm{col}}(\widehat{\bW})$ is not equal to the column subspace of $\bH$, i.e., $\mathop{\mathrm{col}}( \bU)$ with $\bU \in \C^{N_r \times L}$, due to the noise $\bN_S$ in \eqref{colun observation}.
To analyze the column subspace accuracy $\eta_c(\widehat{\bW})$ defined in \eqref{def eta c}, we introduce the theorem \cite{cai2018} below.
\begin{theorem}[\cite{cai2018}]
 \label{theorem1}
  Suppose $\bX \in \C^{M \times N} (M \ge N)$ is of rank-$r$, and $\widehat{\bX} = \bX + \bN$, where $[\bN]_{i,j}$ is i.i.d. with zero mean and unit variance (not necessarily Gaussian). Let the compact SVD of $\bX$ be
  \beq
  \bX = \bU \bSig \bV^H, \nonumber
  \eeq
  where $\bU \in \C^{M \times r}$, $\bV \in \C^{N \times r}$, and $\bSig \in \C^{r \times r}$. We assume the singular values in $\bSig$ are in descending order, i,e, $\sig_1(\bX) \geq \cdots \geq \sig_r(\bX)$.
  Similarly, we partition the SVD of $\widehat{\bX}$ as
  \beq
  \widehat{\bX} =
  \begin{bmatrix}
    \widehat{\bU} &\widehat{\bU}_{\perp}
  \end{bmatrix}
  \begin{bmatrix}
    \widehat{\bSig}_1 & \mathbf{0} \\
    \mathbf{0} & \widehat{\bSig}_2
  \end{bmatrix}
    \begin{bmatrix}
      \widehat{\bV}^H \\
      \widehat{\bV}_{\perp}^H
    \end{bmatrix}, \nonumber
  \eeq
  where $\widehat{\bU} \in \C^{M \times r}$, $\widehat{\bU}_{\perp} \in \C^{M \times (M-r)}$,
  $\widehat{\bV} \in \C^{N \times r}$, $\widehat{\bV}_{\perp} \in \C^{N \times (N-r)}$, $\widehat{\bSig}_1 \in \C^{r \times r}$, and $\widehat{\bSig}_2 \in \C^{(M-r) \times (N-r)}$.
Then, there exists a constant $C > 0$ such that
\beq
\E\left[ \sigma_r^2( \bU^H \widehat{\bU} ) \right] \ge \left( 1- \frac{CM (\sigma_r^2(\bX)+N)}{\sigma_r^4(\bX)} \right)_{+},\nonumber \\
\E\left[ \sigma_r^2( \bV^H \widehat{\bV} )\right] \ge \left(1- \frac{CN (\sigma_r^2(\bX)+M)}{\sigma_r^4(\bX)}\right)_{+}, \nonumber
\eeq
where the expectation is taken over the random noise $\bN$.
In particular, when the noise is i.i.d. $\cC\cN(0,1)$, it has $C=2$.
\end{theorem}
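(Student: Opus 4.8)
The plan is to translate the statement about $\sigma_r^2(\bU^H\widehat{\bU})$ into a bound on the principal angles between the true and estimated singular subspaces, and then to control those angles by combining an exact-rank algebraic identity with standard random-matrix estimates. First I would reduce to principal angles. The singular values of $\bU^H\widehat{\bU}\in\C^{r\times r}$ are the cosines $\cos\theta_1\ge\cdots\ge\cos\theta_r$ of the principal angles $\Theta(\bU,\widehat{\bU})=\diag(\theta_1,\ldots,\theta_r)$ between $\col(\bU)$ and $\col(\widehat{\bU})$. Using $\widehat{\bU}\widehat{\bU}^H+\widehat{\bU}_\perp\widehat{\bU}_\perp^H=\bI_M$ one obtains $(\bU^H\widehat{\bU})(\bU^H\widehat{\bU})^H+(\bU^H\widehat{\bU}_\perp)(\bU^H\widehat{\bU}_\perp)^H=\bI_r$, so that $\sigma_r^2(\bU^H\widehat{\bU})=1-\|\bU^H\widehat{\bU}_\perp\|_2^2=1-\|\sin\Theta(\bU,\widehat{\bU})\|_2^2$. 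Hence it suffices to prove $\E\|\sin\Theta(\bU,\widehat{\bU})\|_2^2\le CM(\sigma_r^2(\bX)+N)/\sigma_r^4(\bX)$ (and the symmetric statement for $\bV$); combining this with the trivial bound $\E[\sigma_r^2(\bU^H\widehat{\bU})]\ge 0$ yields the $(\cdot)_+$ truncation in the regime where the estimate exceeds one.

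Next I would exploit that $\bX$ has exact rank $r$. Because $\widehat{\bX}\widehat{\bV}=\widehat{\bU}\widehat{\bSig}_1$ and $\widehat{\bU}_\perp\perp\widehat{\bU}$, we have $\widehat{\bU}_\perp^H\widehat{\bX}\widehat{\bV}=\mathbf{0}$. Substituting $\widehat{\bX}=\bX+\bN$ with $\bX=\bU\bSig\bV^H$ (no tail, since $\sigma_{r+1}(\bX)=0$) gives the key identity $\widehat{\bU}_\perp^H\bU\,\bSig\,(\bV^H\widehat{\bV})=-\,\widehat{\bU}_\perp^H\bN\widehat{\bV}$. As $\bSig$ is invertible with $\sigma_{\min}(\bSig)=\sigma_r(\bX)$, this isolates the overlap $\bG=\widehat{\bU}_\perp^H\bU$, whose spectral norm equals $\|\sin\Theta(\bU,\widehat{\bU})\|_2$, as $\bG=-\widehat{\bU}_\perp^H\bN\widehat{\bV}(\bV^H\widehat{\bV})^{-1}\bSig^{-1}$, and therefore the deterministic estimate $\|\sin\Theta(\bU,\widehat{\bU})\|_2\le\|\widehat{\bU}_\perp^H\bN\widehat{\bV}\|_2/(\sigma_r(\bX)\,\sigma_r(\bV^H\widehat{\bV}))$. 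Starting instead from $\widehat{\bU}^H\widehat{\bX}\widehat{\bV}_\perp=\mathbf{0}$ yields the mirror-image bound for the right subspace, with $M$ and $N$ interchanged and with $\sigma_r(\widehat{\bU}^H\bU)$ in place of $\sigma_r(\bV^H\widehat{\bV})$.

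Then I would take expectations. The numerator obeys $\|\widehat{\bU}_\perp^H\bN\widehat{\bV}\|_2\le\|\bN\|_2$, and for an $M\times N$ matrix ($M\ge N$) with i.i.d.\ zero-mean unit-variance entries one has $\E\|\bN\|_2^2=\mathcal{O}(M)$, the complex Gaussian case furnishing $C=2$; this is also what makes $\sigma_{r+1}(\widehat{\bX})\le\|\bN\|_2$ via Weyl's inequality. The surviving factor $\sigma_r(\bV^H\widehat{\bV})=\cos\theta_{\max}(\bV,\widehat{\bV})=(1-\|\sin\Theta(\bV,\widehat{\bV})\|_2^2)^{1/2}$ couples the left-subspace bound to the right subspace, and symmetrically the right bound involves $\sigma_r(\widehat{\bU}^H\bU)$. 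I would close this loop by regarding the two inequalities as a coupled system in $x=\E\|\sin\Theta(\bU,\widehat{\bU})\|_2^2$ and $y=\E\|\sin\Theta(\bV,\widehat{\bV})\|_2^2$ and solving it self-consistently; the additive cross terms $+N$ (resp.\ $+M$) and the $\sigma_r^4(\bX)$ denominator in the stated bound are exactly the second-order correction produced by substituting one angle bound into the coupling factor of the other.

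The hard part will be precisely this coupling, compounded by the statistical dependence of $\widehat{\bU},\widehat{\bV}$ on $\bN$: one cannot evaluate $\E\|\widehat{\bU}_\perp^H\bN\widehat{\bV}\|_2$ as though $\widehat{\bV}$ were a fixed semi-unitary matrix independent of the noise. Making Step~3 rigorous therefore demands either a genuine fixed-point argument on the coupled angle bounds or a leave-one-out decoupling of $\widehat{\bU},\widehat{\bV}$ from $\bN$, and it is here — rather than in the elementary algebra of the first two steps — that the constant $C$ and the exact shape of the bound are pinned down. Once that dependence is tamed, everything else collapses to Weyl's inequality and the estimate $\E\|\bN\|_2^2=\mathcal{O}(M)$.
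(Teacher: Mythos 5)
First, a point of order: the paper itself does not prove this statement. Theorem~\ref{theorem1} is imported verbatim from \cite{cai2018} and is used purely as a black box to establish Propositions~\ref{prop1} and~\ref{lemma row}, so there is no in-paper proof to compare you against; what follows assesses your sketch against what is actually required (i.e., the argument in the cited reference). Your first two steps are correct and are indeed the standard opening for this kind of bound: the identity $\sigma_r^2(\bU^H\widehat{\bU}) = 1-\|\bU^H\widehat{\bU}_\perp\|_2^2$ follows from $\bU^H(\widehat{\bU}\widehat{\bU}^H+\widehat{\bU}_\perp\widehat{\bU}_\perp^H)\bU=\bI_r$, and the exact-rank relation $\widehat{\bU}_\perp^H\bU\,\bSig\,(\bV^H\widehat{\bV})=-\widehat{\bU}_\perp^H\bN\widehat{\bV}$, giving $\|\sin\Theta(\bU,\widehat{\bU})\|_2\le \|\widehat{\bU}_\perp^H\bN\widehat{\bV}\|_2/\bigl(\sigma_r(\bX)\,\sigma_r(\bV^H\widehat{\bV})\bigr)$, is sound (modulo invertibility of $\bV^H\widehat{\bV}$, which the $(\cdot)_+$ truncation lets you assume).

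The genuine gap is in your Step 3, and it is sharper than the dependence issue you flag: the numerator bound $\|\widehat{\bU}_\perp^H\bN\widehat{\bV}\|_2\le\|\bN\|_2$, fed into your coupled system, \emph{provably cannot} reproduce the stated rates. With $\E\|\bN\|_2^2=\cO(M)$ both inequalities acquire the same numerator, the self-consistent solution is $\max\{x,y\}\lesssim z+\cO(z^2)$ with $z=\|\bN\|_2^2/\sigma_r^2(\bX)$, and taking expectations yields $\cO\bigl(M/\sigma_r^2(\bX)+M^2/\sigma_r^4(\bX)\bigr)$ for \emph{both} subspaces. For the right subspace this is far weaker than the claimed $N(\sigma_r^2(\bX)+M)/\sigma_r^4(\bX)$, whose leading term is $N/\sigma_r^2(\bX)$ rather than $M/\sigma_r^2(\bX)$; even for the left subspace, $M^2/\sigma_r^4(\bX)$ dominates the claimed cross term $MN/\sigma_r^4(\bX)$ whenever $M>N$. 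The asymmetry between the two bounds is precisely the point of \cite{cai2018}, and it comes from compressed-noise estimates your sketch discards: one must show $\E\|\widehat{\bU}^H\bN\widehat{\bV}_\perp\|_2^2=\cO(N)$ and $\E\|\widehat{\bU}_\perp^H\bN\widehat{\bV}\|_2^2=\cO(M)$, i.e., the noise viewed through the $r$-dimensional estimated subspaces concentrates at the scale of the smaller effective dimension, not at $\|\bN\|_2^2$. Establishing these despite $\widehat{\bU},\widehat{\bV}$ being functions of $\bN$ is the technical heart of the cited proof and is exactly where the constant $C=2$ for $\cC\cN(0,1)$ noise is pinned down; once they are available, your coupling step does produce the right shape $M/\sigma_r^2(\bX)+MN/\sigma_r^4(\bX)$, so the fixed-point idea is salvageable but is not where the difficulty lives. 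A secondary, fixable defect: solving $m(1-m)\le z$ admits a large root near $1$ as well as the small root $\approx z$, and your argument needs an a priori or continuity step to exclude the former for a fixed noise realization.
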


\begin{algorithm} [t]
\caption{Column subspace estimation}
\label{alg_column}
\begin{algorithmic}[1]
\STATE Input: channel dimension: $N_r$, $N_t$; number of RF chains at receiver: $M_{RF}$; channel paths: $L$; parameter: $m$.
\STATE Initialization: channel use index $k=1$.
\FOR{$i = 1$ to $m$}
    \STATE Set transmit sounder as $\bff_{(i)} = \be_{i}$.
    \FOR{$j= 1$ to $N_r/M_{RF} $}

\STATE Design receive training sounder as
 $\bW_{(i,j)} =  [\bM]_{:, (j-1)M_{RF}+1:jM_{RF}} \bI_{M_{RF}}$.
\STATE Obtain the received signal $\by_{(i,j)}\!\!\!\!=\!\!\!\!\bW_{(i,j)} ^H \bH \bff_{(i)} \!\!+ \!\!\bW_{(i,j)} ^H \bn_{(i,j)}$.
\STATE Update $k=k+1$.
\ENDFOR
 \STATE$\by_{i} = \left[\by_{(i,1)}^T, \cdots,\by^T_{(i,N_r/M_{RF})} \right]^T$.
\ENDFOR
 \STATE$\bY_S = \bM \left[\by_{1},\cdots, \by_{m}\right]$.
\STATE Column subspace $\widehat{\bU}$ is obtained by the dominant $L$ left singular vectors of $\bY_S$.
\STATE Design $\widehat{\bW}$ based on $\widehat{\bU}$ by solving \eqref{receiver sounder}.
\STATE Output: Column subspace estimation $\widehat{\bW}$.
\end{algorithmic}
\end{algorithm}

We have the following proposition for the accuracy of the column subspace estimation in Algorithm \ref{alg_column}.
\begin{proposition} \label{prop1}
If the Euclidean distance $\| \widehat{\bW}  -  \widehat{\bU}\|_F \le \delta_1$ in \eqref{receiver sounder}, then the accuracy of the estimated column subspace matrix $\widehat{\bW}$ obtained from Algorithm \ref{alg_column} is lower bounded as
\beq
\sqrt{\eta_c(\widehat{\bW})} \ge \sigma_L(\widehat{\bU}^H \bU) -\delta_1, \label{mid eq1}
\eeq
where $\bU \in \C^{N_r \times L}$ is the matrix composed of $L$ dominant left singular vectors of $\bH$.
In particular, if $\delta_1 \rightarrow 0$, we have
\beq
\mathbb{E}\left[ \eta_c (\widehat{\bW}) \right] \!\!\!\!&\ge& \!\!\!\! \sigma_L^2(\widehat{\bU}^H \bU) \nonumber\\
&\ge&\!\!\!\! \left ( 1-\frac{ 2N_r (\sigma^2 \sig_L^2(\bH_S)+m \sigma^4)}{\sig_L^4(\bH_S)} \right)_+, \label{column subspace ac}
\eeq
where the $\sig_L(\bH_S)$ is the $L$th largest singular value of $\bH_S$.
\begin{proof}
See Appendix \ref{appendix2}. \hfill \qed
\end{proof}
\end{proposition}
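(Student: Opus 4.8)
The plan is to split the argument into a deterministic inequality, namely \eqref{mid eq1}, relating $\eta_c(\widehat{\bW})$ to the alignment $\sigma_L(\widehat{\bU}^H\bU)$ between the estimated and true column subspaces, followed by a probabilistic lower bound on $\mathbb{E}[\sigma_L^2(\widehat{\bU}^H\bU)]$ obtained by invoking Theorem \ref{theorem1}.

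For the deterministic part, I would first observe that the numerator of $\eta_c$ in \eqref{def eta c} is $\tr(\widehat{\bW}^H\bH\bH^H\widehat{\bW}) = \|\widehat{\bW}^H\bH\|_F^2$ and the denominator is $\|\bH\|_F^2$, so that $\sqrt{\eta_c(\widehat{\bW})} = \|\widehat{\bW}^H\bH\|_F/\|\bH\|_F$. Writing $\widehat{\bW} = \widehat{\bU} + (\widehat{\bW}-\widehat{\bU})$ and applying the triangle inequality bounds this below by $\|\widehat{\bU}^H\bH\|_F/\|\bH\|_F$ minus $\|(\widehat{\bW}-\widehat{\bU})^H\bH\|_F/\|\bH\|_F$. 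For the first term I substitute the compact SVD $\bH = \bU\bSig\bV^H$ (of rank $L$); since $\bV$ has orthonormal columns, $\|\widehat{\bU}^H\bU\bSig\bV^H\|_F = \|\widehat{\bU}^H\bU\bSig\|_F$, and the elementary bound $\|\bA\bB\|_F \ge \sigma_L(\bA)\|\bB\|_F$ with $\bA = \widehat{\bU}^H\bU$ and $\bB = \bSig$ gives $\ge \sigma_L(\widehat{\bU}^H\bU)\|\bH\|_F$. For the second term, submultiplicativity $\|(\widehat{\bW}-\widehat{\bU})^H\bH\|_F \le \|\widehat{\bW}-\widehat{\bU}\|_2\|\bH\|_F$ together with $\|\widehat{\bW}-\widehat{\bU}\|_2 \le \|\widehat{\bW}-\widehat{\bU}\|_F \le \delta_1$ shows it is at most $\delta_1$. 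Combining the two estimates yields \eqref{mid eq1}.

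For the probabilistic part, letting $\delta_1 \to 0$ and squaring gives $\eta_c(\widehat{\bW}) \ge \sigma_L^2(\widehat{\bU}^H\bU)$ pointwise, so it remains to lower bound $\mathbb{E}[\sigma_L^2(\widehat{\bU}^H\bU)]$. I would apply Theorem \ref{theorem1} to the normalized model $\bY_S/\sigma = \bH_S/\sigma + \bN_S/\sigma$, where $\bN_S/\sigma$ has unit-variance i.i.d. $\cC\cN$ entries (as established right after \eqref{colun observation}), so that $C=2$, $M = N_r$, $N = m$, and $r = L$. The theorem then bounds $\mathbb{E}[\sigma_L^2(\bU_S^H\widehat{\bU})]$, where $\bU_S$ collects the dominant left singular vectors of $\bH_S$; substituting $\sigma_L(\bH_S/\sigma) = \sigma_L(\bH_S)/\sigma$ into the bound and clearing the powers of $\sigma$ reproduces exactly the right-hand side of \eqref{column subspace ac}.

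The main obstacle is the reconciliation step. Theorem \ref{theorem1} naturally relates $\widehat{\bU}$ to the singular vectors $\bU_S$ of the sampled matrix $\bH_S$, whereas the proposition is phrased in terms of $\bU$, the dominant left singular vectors of the full channel $\bH$. The key is that $\sigma_L(\widehat{\bU}^H\bU)$ depends only on the principal angles between $\col(\widehat{\bU})$ and $\col(\bU)$, not on the chosen orthonormal bases, and Lemma \ref{lemma1} guarantees $\col(\bH_S) = \col(\bH)$ almost surely; hence $\col(\bU_S) = \col(\bU)$ and $\sigma_L(\widehat{\bU}^H\bU) = \sigma_L(\widehat{\bU}^H\bU_S)$, so the two formulations coincide. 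I would also confirm the minor technical points that $\bH$ and $\bH_S$ are genuinely rank-$L$, so the compact SVDs carry the stated dimensions, and that scaling by $1/\sigma$ leaves the singular vectors, and thus $\widehat{\bU}$, unchanged.
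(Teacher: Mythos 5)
Your proposal is correct and follows essentially the same route as the paper's own proof in Appendix~\ref{appendix2}: the identical splitting $\widehat{\bW}=\widehat{\bU}+(\widehat{\bW}-\widehat{\bU})$ with the triangle inequality and the bound $\|\bA\bB\|_F\ge\sigma_L(\bA)\|\bB\|_F$ for the deterministic part, followed by the same invocation of Theorem~\ref{theorem1} for the expectation bound. In fact you are slightly more careful than the paper, since you make explicit the $1/\sigma$ normalization needed to match the unit-variance hypothesis of Theorem~\ref{theorem1} and the identification of the left singular vectors of $\bH_S$ with those of $\bH$ via Lemma~\ref{lemma1} and rotation invariance of $\sigma_L(\widehat{\bU}^H\bU)$, both of which the paper's proof uses tacitly.
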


From \eqref{column subspace ac}, the larger the value of $m$ is, the more accurate the column subspace estimation.
Thus, when more columns are used for the column subspace estimation, the estimated column subspace will be more reliable.
In particular, when the noise level is low such that $\sig_L^2(\bH_S)\!\! \gg \! m\sig^2 $ in \eqref{column subspace ac}, we have
\beq
\mathbb{E}\left[ \eta_c (\widehat{\bW}) \right] \ge \left ( 1-\frac{ 2N_r  \sigma^2 }{\sig_L^2(\bH_S)} \right)_+. \nonumber
\eeq
It means that the column subspace estimation accuracy is linearly related to the value of $\sigma^2\!\!  / \! \sig_L^2(\bH_S)$, i.e., $\cO(\text{SNR})$.
On the other hand, when the noise level is high such that $\sig_L^2(\bH_S) \!  \ll\!    m\sig^2 $, the bound in \eqref{column subspace ac} can be written as
\beq
\mathbb{E}\left[ \eta_c (\widehat{\bW})  \right] \ge\left ( 1-\frac{2 N_r m \sigma^4}{\sig_L^4(\bH_S)} \right)_+. \nonumber
\eeq
At low SNR, the column subspace estimation accuracy is quadratically related to $\sigma^4 /\sig_L^4(\bH_S)$, i.e., $\cO(\text{SNR}^2)$.
\begin{remark}
When the number of paths, $L$, increases, the value of $\sigma_L(\bH_S)$ in \eqref{column subspace ac} will decrease, which can be interpreted as follows. When $m, N_r \! \rightarrow \!  \infty$, the entries in $\bH_S \! \in \!  \C^{N_r \times m}$ can be generally approximated as standard Gaussian r.v.s \cite{probability2010}. Moreover, it has been shown in \cite{wei2017upper,SVDbound} that the $L$th largest singular value of $\sigma_L  (\bH_S   )     \!  \propto  \! \frac{N_r+1-L}{\sqrt{N_r}}$  with high probability.
As a result, the accuracy of column subspace estimation will be decreased as $ L $ increases due to \eqref{column subspace ac} of Proposition \ref{prop1}.
\end{remark}

\vspace{-6pt}
\subsection{Estimate the Row Subspace} \label{section row}
In this subsection, we present how to learn the row subspace by leveraging the estimated column subspace matrix $\widehat{\bW}$.
Because we have already sampled the first $m$ columns of $\bH$ in the first stage, we only need to sample the remaining $N_t-m$ columns to estimate the row subspace as shown in Fig. \ref{algorithm_diagram}.

At the $k$th channel use of the second stage, we observe the $(m+k)$th column of $\bH$, $k=1,\ldots,(N_t-m)$.
To achieve this, we employ the transmit sounder as
\beq
\bff_{(k)}=\be_{m+k} \label{s2 tran sounder}.
\eeq
For the receive sounder, given the estimated column subspace matrix $\widehat{\bW}$ in the first stage, we just let the receive sounder of the second stage be $\widehat{\bW} \in \C^{N_r \times L}$.\footnote{It is worth noting that because the estimated column subspace of the first stage is $\widehat{\bW}  \in \C^{N_r \times L}$, thus the dimension for receive sounder of second stage is ${N_r \times L}$ rather than ${N_r \times M_{RF}}$ in \eqref{training}.} It is worth noting $\widehat{\bW}$ is trivially applicable for hybrid precoding architecture since $\widehat{\bW}$ is obtained from \eqref{receiver sounder}.
Therefore, under the transmit sounder $\bff_{(k)}$ in \eqref{s2 tran sounder} and receive sounder $\widehat{\bW}$ in \eqref{receiver sounder}, the observation $\by_{(k)} \in \C^{L \times 1}$ at the receiver can be given by
\beq
\by_{(k)}&=& \widehat{\bW}^H\bH\bff_{(k)} +\widehat{\bW}^H\bn_{(k)}\nonumber \\
&=& \widehat{\bW}^H[\bH]_{:,m+k}+ \widehat{\bW}^H\bn_{(k)}, \label{second sample}
\eeq
where $\bn_{(k) }\in \C^{N_r \times 1}$ is the noise vector with $\bn_{(k)} \sim \cC \cN(\boldsymbol{0}_{N_r},\sigma^2\bI_{N_r})$. Then, the observations $k=1, \ldots,(N_t-m)$  in  \eqref{second sample} are packed into a matrix {$\widehat{\bQ}_C  \in \C^{L \times (N_t-m)}$} as
\vspace{-2.5pt}
\beq
\widehat{\bQ}_C &=& [\by_{(1)},\by_{(2)}, \cdots, \by_{(N_t-m)}] \nonumber \\
&=&  \widehat{\bW}^H(\bH_C + \bN_C), \label{Qc expression}
\eeq
where $\bH_C = \left[[\bH]_{:,m+1},\ldots, [\bH]_{:,N_t}\right] \in \C^{N_r \times (N_t-m)}$, and $\bN_C = [\bn_{(1)}, \ldots, \bn_{(N_t-m)}] \in \C^{N_r \times (N_t-m)}$.

\begin{algorithm} [t]
\caption{Row subspace estimation}
\label{alg_row}
\begin{algorithmic}[1]
\STATE Input: channel dimension: $N_r$, $N_t$;  channel paths: $L$; estimated column subspace: $\widehat{\bW}$;  observations of first stage: $\bY_S$; parameter: $m$.
\STATE Set the receive training sounder as $\widehat{\bW}$.
\FOR{$k = $ 1 to $(N_t-m)$}
\STATE Set the transmit training sounder as $\bff_{(k)}= \be_{m+k}$.

\STATE Obtain the received signal:
\STATE $\by_{(k)}=\widehat{\bW}^H \bH \bff_{(k)} + \widehat{\bW}^H \bn_{(k)}$.
\ENDFOR
\STATE Stack all the observations and \eqref{coe 1}:
\STATE $\widehat{\bQ}_C = [\by_{(1)},\by_{(2)}, \cdots, \by_{(N_t-m)}] $.
\STATE Calculate $\widehat{\bQ}$:
$\widehat{\bQ}=\left [\widehat{\bW}^H\bY_S,\widehat{\bQ}_C\right]$.
\STATE  Row subspace matrix $\widehat{\bV}$ is obtained by the dominant $L$ right singular vectors of $\widehat{\bQ}$.
\STATE Design $\widehat{\bF}$ based on $\widehat{\bV}$ by solving \eqref{transmit precoder design}.
\STATE Output: row subspace estimation $\widehat{\bF}$.
\end{algorithmic}
\end{algorithm}

In addition, given the receive sounder $\widehat{\bW}$ and observations $\bY_S$ of the first stage in \eqref{colun observation}, we define $\widehat{\bQ}_S \in \C^{L \times m}$ as,
 \beq
 \widehat{\bQ}_S = \widehat{\bW}^H \bY_S = \widehat{\bW}^H (\bH_S + \bN_S). \label{coe 1}
 \eeq
Combining \eqref{coe 1} and \eqref{Qc expression} yields {$\widehat{\bQ}\in \C^{L \times N_t}$} expressed as,
\beq
\widehat{\bQ} &=&\left [\widehat{\bQ}_S,\widehat{\bQ}_C\right]  \nonumber \\
&=&\left [ \widehat{\bW}^H (\bH_S + \bN_S), \widehat{\bW}^H (\bH_C + \bN_C)\right] \nonumber\\
 &=&  \underbrace{\widehat{\bW}^H \bH}_{\triangleq \bar{\bQ}} +\underbrace{\widehat{\bW}^H \bN}_{\triangleq \bar{\bN}}, \label{Q exp}
\eeq
where $\bN=[\bN_S,\bN_C] \! \!\in \C^{N_r \times N_t}$, $\bH=[\bH_S,\bH_C] \in \C^{N_r \times N_t}$, $\bar{\bQ} = \widehat{\bW}^H \bH \in \C^{L \times N_t}$, and $\bar{\bN} = \widehat{\bW}^H \bN \in \C^{N_r \times N_t}$. Meanwhile, since $\widehat{\bW}$ is semi-unitary and the entries in ${\bN}$ are i.i.d. with distribution $\cC \cN(0,\sigma^2)$, according to Lemma \ref{lemma2}, the entries in $\bar{\bN}$ are also i.i.d. with distribution $\cC \cN(0,\sigma^2)$.

Now, given the expression $\widehat{\bQ}$ in \eqref{Q exp},
the row subspace estimation problem is formulated as,
\beq
\widehat{\bV}= \argmax \limits _{\bV \in \C^{N_t \times L}}  \|  \widehat{\bQ} \bV   \|_F^2 ~\text{subject to} ~ \bV^H \bV  = \bI_L,  \nonumber
\eeq
where the estimated row subspace matrix $\widehat{\bV} \in \C^{N_t \times L}$ is obtained as the dominant $L$ right singular vectors of $\widehat{\bQ}$.
Similarly, in order to design the precoder $\widehat{\bF}$ in \eqref{receiver form} for data transmission, we need to approximate the estimated row subspace matrix $\widehat{\bV}$ under the hybrid precoding architecture.
Specifically, we design the analog precoder $\widehat{\bF}_A \in \C^{N_t \times N_{RF}}$and digital precoder $\widehat{\bF}_D \in \C^{N_{RF} \times L}$ by solving the following problem
\vspace{-2.5pt}
\beq
\left(\widehat{\bF}_A, \widehat{\bF}_D\right) = \argmin_{\bF_A,\bF_D} \| \widehat{\bV}-\bF_A\bF_D \|_F, \nonumber \\
\text{subject to~~} \la[\bF_A]_{i,j}\ra=\frac{1}{\sqrt{N_t}} .\label{transmit precoder design}
\eeq
Therefore, the transmit precoder is given by $\widehat{\bF} = \widehat{\bF}_A \widehat{\bF}_D \in \C^{N_t \times L}$ with $\widehat{\bF}^H\widehat{\bF}=\bI_L$. Similarly, the method on solving \eqref{transmit precoder design} in \cite{spatially} can guarantee $\text{col}(\widehat{\bF}) \approx \text{col}(\widehat{\bV})$.
 The details of our row subspace estimation algorithm are shown in Algorithm \ref{alg_row}.
We have the following proposition about the estimated row subspace accuracy for Algorithm \ref{alg_row}.

\begin{proposition} \label{lemma row}
If the Euclidean distance $\| \widehat{\bF}  -  \widehat{\bV}\|_F \le \delta_2$ in \eqref{transmit precoder design}, then the accuracy of the estimated row subspace matrix $\widehat{\bF}$ obtained from Algorithm \ref{alg_row} is lower bounded as
\beq
\sqrt{\eta_r(\widehat{\bF})} \ge \sigma_L(\widehat{\bV}^H \bV) -\delta_2, \label{lemma row mid1}
\eeq
where $\bV \in \C^{N_t \times L}$ is the matrix composed of the $L$ dominant right singular vectors of $\bH$. In particular, if $\delta_2 \rightarrow 0$, we have
\beq
\mathbb{E}\left[ \eta_r (\widehat{\bF}) \right] \!\!\!\!&\ge&\!\!\!\! \sigma_L^2(\widehat{\bV}^H \bV) \nonumber \\
&\ge&\!\!\!\! \left ( 1-\frac{2 N_t(\sigma^2 \sig_L^2(\bar{\bQ})+L \sigma^4)}{\sig_L^4(\bar{\bQ})} \right)_+, \label{coefficient matrix}
\eeq
where $\sig_L(\bar{\bQ})$ is the $L$th largest singular value of $\bar{\bQ}$ in \eqref{Q exp}.
\end{proposition}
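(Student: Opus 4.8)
The plan is to mirror the proof of Proposition~\ref{prop1}, since \eqref{lemma row mid1}--\eqref{coefficient matrix} are exactly the row-subspace counterparts of \eqref{mid eq1}--\eqref{column subspace ac}. The argument splits into a deterministic inequality that reduces $\eta_r(\widehat{\bF})$ to $\sigma_L(\widehat{\bV}^H\bV)$, followed by a probabilistic step that lower-bounds $\mathbb{E}[\sigma_L^2(\widehat{\bV}^H\bV)]$ through Theorem~\ref{theorem1}.

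For the deterministic bound \eqref{lemma row mid1}, I would start from $\sqrt{\eta_r(\widehat{\bF})} = \|\bH\widehat{\bF}\|_F/\|\bH\|_F$, write $\bH\widehat{\bF} = \bH\widehat{\bV} + \bH(\widehat{\bF}-\widehat{\bV})$, and apply the triangle inequality to split off the perturbation term. Using the compact SVD $\bH = \bU\bSig\bV^H$ and the fact that left multiplication by the semi-unitary $\bU$ preserves the Frobenius norm, the leading term becomes $\|\bSig\bV^H\widehat{\bV}\|_F/\|\bSig\|_F$. Here I would invoke the right-factor analogue of the inequality used in Proposition~\ref{prop1}, namely $\|\bSig(\bV^H\widehat{\bV})\|_F \ge \sigma_L(\bV^H\widehat{\bV})\|\bSig\|_F$, which follows from the row-wise estimate $\|\bC\bD\|_F \ge \sigma_{\min}(\bD)\|\bC\|_F$ with $\bC = \bSig$ and $\bD = \bV^H\widehat{\bV}$ square of full rank; this gives a lower bound of $\sigma_L(\bV^H\widehat{\bV}) = \sigma_L(\widehat{\bV}^H\bV)$. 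The perturbation term is controlled by $\|\bH(\widehat{\bF}-\widehat{\bV})\|_F \le \|\bH\|_F\|\widehat{\bF}-\widehat{\bV}\|_2 \le \|\bH\|_F\delta_2$, which yields \eqref{lemma row mid1} after dividing by $\|\bH\|_F$.

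For the probabilistic bound \eqref{coefficient matrix}, letting $\delta_2\to 0$ and taking the expectation of the square of \eqref{lemma row mid1} gives $\mathbb{E}[\eta_r(\widehat{\bF})] \ge \mathbb{E}[\sigma_L^2(\widehat{\bV}^H\bV)]$. The key observation is that $\widehat{\bV}$ is extracted from $\widehat{\bQ} = \bar{\bQ} + \bar{\bN}$ in \eqref{Q exp}, whose noiseless part is $\bar{\bQ} = \widehat{\bW}^H\bH$ rather than $\bH$. To apply Theorem~\ref{theorem1}, which requires a tall matrix, I would work with the transpose $\widehat{\bQ}^H = \bar{\bQ}^H + \bar{\bN}^H \in \C^{N_t\times L}$; the dominant right singular vectors $\widehat{\bV}$ of $\widehat{\bQ}$ are precisely the dominant left singular vectors of $\widehat{\bQ}^H$. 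Setting $\bX = \bar{\bQ}^H$ (so $M = N_t$, $N = L$, $r = L$) and recalling from \eqref{Q exp} that $\bar{\bN}$, hence $\bar{\bN}^H$, has i.i.d.\ $\cC\cN(0,\sigma^2)$ entries, the left-singular-vector bound of Theorem~\ref{theorem1} with $C=2$ and the variance $\sigma^2$ scaled in yields $\mathbb{E}[\sigma_L^2(\bV_{\bar{\bQ}}^H\widehat{\bV})] \ge (1 - 2N_t(\sigma^2\sigma_L^2(\bar{\bQ}) + L\sigma^4)/\sigma_L^4(\bar{\bQ}))_+$, where $\bV_{\bar{\bQ}}$ denotes the dominant right singular vectors of $\bar{\bQ}$.

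The main obstacle, and the step not present verbatim in Proposition~\ref{prop1}, is reconciling $\bV_{\bar{\bQ}}$ with the right singular vectors $\bV$ of $\bH$ that appear in the statement. I would argue that, since $\bH = \bU\bSig\bV^H$ has rank $L$ and the column-subspace estimate $\widehat{\bW}$ makes $\widehat{\bW}^H\bU$ of full rank, the product $\bar{\bQ} = (\widehat{\bW}^H\bU)\bSig\bV^H$ also has rank $L$, so its row space coincides with that of $\bH$, i.e.\ $\col(\bV_{\bar{\bQ}}) = \col(\bV)$. Consequently $\bV = \bV_{\bar{\bQ}}\bR$ for some unitary $\bR \in \C^{L\times L}$, and by the rotation invariance of the smallest singular value, $\sigma_L(\widehat{\bV}^H\bV) = \sigma_L(\widehat{\bV}^H\bV_{\bar{\bQ}})$. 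Substituting this identity into the Theorem~\ref{theorem1} bound establishes \eqref{coefficient matrix}. This rank condition on $\widehat{\bW}^H\bU$ is exactly what is implicitly required for $\sigma_L(\bar{\bQ}) > 0$ in the denominator, so it is consistent with the stated bound.
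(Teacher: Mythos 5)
Your proof is correct and follows essentially the same route as the paper's: the triangle-inequality decomposition $\bH\widehat{\bF} = \bH\widehat{\bV} + \bH(\widehat{\bF}-\widehat{\bV})$ with the singular-value product bound for \eqref{lemma row mid1}, and Theorem \ref{theorem1} applied to $\widehat{\bQ} = \bar{\bQ} + \bar{\bN}$ (after transposition and noise rescaling) for \eqref{coefficient matrix}. In fact, your final step — identifying the right singular vectors of $\bar{\bQ}$ with $\bV$ up to a unitary rotation via the full-rank condition on $\widehat{\bW}^H\bU$, and invoking the rotation invariance of $\sigma_L$ — makes explicit a subtlety the paper's proof passes over silently, since the paper writes the Theorem \ref{theorem1} bound directly in terms of $\bV$ even though the theorem only concerns the singular vectors of the noiseless matrix $\bar{\bQ}$, not those of $\bH$.
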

\begin{proof}
See Appendix \ref{appendix3}. \hfill \qed
\end{proof}

Similar as the column subspace estimation, the row subspace accuracy linearly increases with the SNR, i.e., $\mathcal{O}(\text{SNR})$ at high SNR, and quadratically increases with SNR, i.e., $\cO(\text{SNR}^2)$, at low SNR.
Also, the accuracy of row subspace estimation decreases with the number of paths, $L$.
As the value of $\sig_L(\bar{\bQ})$ in \eqref{coefficient matrix} grows, we can have a more accurate row subspace estimation. Moreover, considering $\bar{\bQ} =\widehat{\bW}^H  \bH$, it is intuitive that the estimated column subspace matrix $\widehat{\bW}$ will affect
the value of $\sig_L(\bar{\bQ})$, and then affect the accuracy of row subspace estimation. Specifically, when $\mathop{\mathrm{col}}(\widehat{\bW}) = \mathop{\mathrm{col}}({\bU}) $, we will have
$\sig_L(\bar \bQ) = \sig_L(\bH)$, which attains the maximum.
In the following, we further discuss the relationship between $\sig_L(\bar{\bQ})$ and $\sig_L(\bH)$.

With the SVD of $\bH$, i.e., $\bH = \bU\bSig\bV^H$, we have $\bar{\bQ} = \widehat{\bW}^H\bH  =   \widehat{\bW}^H \bU \bSig \bV^H$.
Then, the following relationship is true due to the singular value product inequality,
\beq
\sigma_L(\bar{\bQ} ) &\ge& \sigma_L( \widehat{\bW}^H \bU  ) \sigma_L(\bSig \bV^H )    \nonumber \\
&=&  \sigma_L(\widehat{\bW}^H \bU  ) \sigma_L(\bH )  . \label{column affect}
\eeq
Therefore, $\sigma_L(\bar{\bQ} )$ is lower bounded by the product of the $L$th largest singular values of $\widehat{\bW}^H \bU $ and $\bH $. When the estimation of the column subspace becomes accurate, the $\sigma_L(\widehat{\bW}^H \bU  )$ will approach to one.
As a result, the value of $\sigma_L(\bar{\bQ} ) $ is approximately equal to $\sigma_L(\bH )$, resulting in a further enhanced row subspace estimation.
The inequality in \eqref{column affect} reveals that the column subspace estimation affects the accuracy of the row subspace estimation.

Given the estimated column subspace $\widehat{\bW}$ in Algorithm \ref{alg_column} and row subspace $\widehat{\bF}$ in Algorithm \ref{alg_row}, the following lemma shows the subspace estimation accuracy of the proposed SASE, i.e., $\eta(\widehat{\bW},\widehat{\bF})$ defined in \eqref{subspace metric}.
\begin{lemma} \label{subspace proof}
If we assume $\delta_1 \rightarrow 0 $ and $\delta_2 \rightarrow 0$ in \eqref{receiver sounder} and \eqref{transmit precoder design},
the subspace estimation accuracy defined in \eqref{subspace metric} associated with $\widehat{\bW}$ and $\widehat{\bF}$ is lower bounded as
\beq
\eta(\widehat{\bW},\widehat{\bF})\ge \sigma_L^2(\widehat{\bU}^H \bU)\sigma_L^2(\widehat{\bV}^H \bV).
\eeq
\end{lemma}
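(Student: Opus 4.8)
The plan is to reduce the statement to the two limiting identities $\widehat{\bW}=\widehat{\bU}$ and $\widehat{\bF}=\widehat{\bV}$, which are precisely what $\delta_1\to 0$ in \eqref{receiver sounder} and $\delta_2\to 0$ in \eqref{transmit precoder design} deliver. Substituting these into the definition \eqref{subspace metric} turns the numerator into $\lA \widehat{\bU}^H \bH \widehat{\bV} \rA_F^2$ and leaves the denominator $\tr(\bH^H\bH)=\lA \bH\rA_F^2=\lA\bSig\rA_F^2$, where $\bH=\bU\bSig\bV^H$ is the compact SVD used throughout. First I would insert this SVD to write $\widehat{\bU}^H\bH\widehat{\bV}=(\widehat{\bU}^H\bU)\,\bSig\,(\widehat{\bV}^H\bV)^H$, setting $\bP=\widehat{\bU}^H\bU$ and $\bR=\widehat{\bV}^H\bV$, both in $\C^{L\times L}$. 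The target then reads $\lA \bP\,\bSig\,\bR^H\rA_F^2 \ge \sig_L^2(\bP)\,\sig_L^2(\bR)\,\lA\bSig\rA_F^2$, after which dividing by $\lA\bSig\rA_F^2$ gives the claim, since $\sig_L(\bP)=\sig_L(\widehat{\bU}^H\bU)$ and $\sig_L(\bR)=\sig_L(\widehat{\bV}^H\bV)$.

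The workhorse is the same elementary bound already invoked in the proof of Proposition \ref{prop1}: for a square full-rank $\bA\in\C^{n\times n}$ and any $\bB\in\C^{n\times k}$, $\lA \bA\bB\rA_F^2\ge \sig_n^2(\bA)\lA \bB\rA_F^2$. I would apply it twice. Peeling off the left factor with $\bA=\bP$ and $\bB=\bSig\bR^H$ gives $\lA \bP\,\bSig\,\bR^H\rA_F^2 \ge \sig_L^2(\bP)\,\lA \bSig\,\bR^H\rA_F^2$. Because $\bSig$ is real and diagonal, $\lA \bSig\,\bR^H\rA_F=\lA \bR\,\bSig\rA_F$, so peeling off $\bR$ with $\bA=\bR$ and $\bB=\bSig$ yields $\lA \bR\,\bSig\rA_F^2\ge \sig_L^2(\bR)\,\lA\bSig\rA_F^2$; chaining the two inequalities produces exactly the target bound.

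The only points needing care are that $\bSig\in\C^{L\times L}$ is invertible, which is guaranteed since $\bH$ has rank exactly $L$, so the compact SVD makes $\bP$ and $\bR$ square of size $L$ and the elementary bound applies; and that singular values are invariant under conjugate transposition, so $\sig_L(\bR^H)=\sig_L(\bR)$ and the factor really equals $\sig_L(\widehat{\bV}^H\bV)$. I do not expect any genuine obstacle here: once the limiting identities collapse $\widehat{\bW},\widehat{\bF}$ onto $\widehat{\bU},\widehat{\bV}$, the argument is just two applications of a one-line lemma. Finally, I would note that combining this deterministic bound with Propositions \ref{prop1} and \ref{lemma row} immediately upgrades it to an SNR-dependent guarantee, since those results control $\mathbb{E}[\sig_L^2(\widehat{\bU}^H\bU)]$ and $\mathbb{E}[\sig_L^2(\widehat{\bV}^H\bV)]$ through $\sig_L(\bH_S)$ and $\sig_L(\bar{\bQ})$, respectively.
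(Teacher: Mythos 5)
Your proposal is correct and follows essentially the same route as the paper's own proof: substitute the limiting identities $\widehat{\bW}=\widehat{\bU}$, $\widehat{\bF}=\widehat{\bV}$, insert the SVD $\bH=\bU\bSig\bV^H$, and lower-bound $\|\widehat{\bU}^H\bU\,\bSig\,\bV^H\widehat{\bV}\|_F^2/\|\bSig\|_F^2$ by the product of squared $L$th singular values. The only difference is cosmetic: where the paper invokes ``the singular value product inequality'' in one step, you make it explicit by applying the elementary bound $\|\bA\bB\|_F^2\ge\sigma_n^2(\bA)\|\bB\|_F^2$ twice (together with $\|\bSig\bR^H\|_F=\|\bR\bSig\|_F$), which is exactly the mechanism underlying the paper's step $(b)$.
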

\begin{proof}
  Using the definition of $\eta(\widehat{\bW},\widehat{\bV})$ in \eqref{subspace metric}, we have the following expressions,
\beq
\eta(\widehat{\bW},\widehat{\bF})   &=  & {\|  \widehat{\bW}^H \bH \widehat{\bF}\|_F^2}/{\tr(\bH^H \bH)}\nonumber \\
&\overset{(a)}{=} & {\|  \widehat{\bU}^H \bH \widehat{\bV}\|_F^2}/{\tr(\bH^H \bH)}\nonumber \\
  &=  & {\|  \widehat{\bU}^H \bU \bSig \bV^H \widehat{\bV}\|_F^2}/{\tr(\bH^H \bH)}\nonumber \\
  &\overset{(b)}{\ge} &  \sigma_L^2(\widehat{\bU}^H \bU)\sigma_L^2(\widehat{\bV}^H \bV),\nonumber
\eeq
where the equality $(a)$ holds for $\delta_1 \rightarrow 0 $ and $\delta_2 \rightarrow 0$, and the inequality $(b)$ holds based on the singular value product inequality.
\hfill \qed
\end{proof}

 Lemma \ref{subspace proof} tells that the power captured by $\widehat{\bW}$ and $\widehat{\bF}$ is lower bounded by the product of  $\sigma_L^2(\widehat{\bU}^H \bU)$ and $\sigma_L^2(\widehat{\bV}^H \bV)$.
 These two parts denotes the two stages in the proposed SASE, which are column subspace estimation and row subspace estimation, respectively. Ideally, when $\text{col}(\widehat{\bU})=\text{col}({\bU})$ and $\text{col}(\widehat{\bV})=\text{col}({\bV})$, we have $\eta(\widehat{\bW},\widehat{\bF})=1$. Nevertheless, the proposed SASE can still achieve nearly optimal $\eta(\widehat{\bW},\widehat{\bF})$. This is because  $\sigma_L^2(\widehat{\bU}^H \bU)$ and $\sigma_L^2(\widehat{\bV}^H \bV)$ are close to one  according to the bounds provided in \eqref{column subspace ac} and \eqref{coefficient matrix}, respectively.

\subsection{Channel Estimation Based on the Estimated Subspaces} \label{SASE channel result}
In this subsection, we introduce a channel estimation method based on the estimated column subspace $\widehat{\bW} \in \C^{N_r \times L}$ and row subspace $\widehat{\bF} \in \C^{N_t \times L}$.
Let the channel estimate be expressed as
\beq
\widehat{\bH} = \widehat{\bW}\widehat{\bR}\widehat{\bF}^H, \label{rep of est}
\eeq
where $\widehat{\bR} \in \C^{L \times L}$. Now, given $\widehat{\bW}$ and $\widehat{\bF}$, it only needs to obtain $\widehat{\bR}$ in an optimal manner.

Recalling the column subspace estimation in Section \ref{Section_column} and row subspace estimation in Section \ref{section row}, the corresponding received signals are expressed as
\beq
\bY_S  &=&\bH_S + \bN_S \nonumber \\
\widehat{\bQ}_C &=& \widehat{\bW}^H\bH_C + \widehat{\bW}^H \bN_C\nonumber.
\eeq
It is worth noting that the entries in $\bN_S$ and $\widehat{\bW}^H \bN_C$ are both i.i.d with distribution $\cC\cN(0,\!\sigma^2)$.
Based on the expression of $\widehat{\bH}$ in \eqref{rep of est}, the maximum likelihood estimation of $\widehat{\bR}$ in \eqref{rep of est} can be obtained through the following least squares problem,
\beq
    &&\!\!\!\!\!\!\!\!\!\!\!\!\!\!\! \!\!\min \limits_{\bR \in \C^{L \times L}} \| \bY_S -  \widehat{\bH}_S \|_F^2+\| \widehat{\bQ}_C-\widehat{\bW}^H \widehat{\bH}_C\|_F^2 \nonumber \\
&&\!\!\!\!\!\!\!\!\!\!\!\!\!\!\! \!\! \text{subject to} ~ \widehat{\bH}_S \!\!=\!\! [\widehat{\bW}\bR\widehat{\bF}^H]_{:,1:m}, ~ \!\! \widehat{\bH}_C \!\!=\!\! [\widehat{\bW}\bR\widehat{\bF}^H]_{:,m+1:N_t}\!.
\label{R estimation}
\eeq
Before discussing how to solve the problem in \eqref{R estimation}, for convenience, we define
\beq
\br &=&\vec(\bR)\in \C^{L^2 \times 1}, \nonumber \\
\by_S &=&\vec(\bY_S)\in \C^{m N_r \times 1} ,\nonumber \\
\widehat{\bq}_C &=&\vec(\widehat{\bQ}_C)\in \C^{(N_t-m)L \times 1},\nonumber \\
\bA_1 &=&([\widehat{\bF}]_{:,1:m}^H)^T \otimes \widehat{\bW} \in \C^{m N_r \times L^2},\nonumber \\
\bA_2 &=&([\widehat{\bF}]_{:,m+1:N_t}^H)^T \otimes \bI_L \in \C^{(N_t-m)L \times L^2}. \nonumber
\eeq
Using the definitions above, the minimization problem in \eqref{R estimation} can be rewritten as
\beq
\min \limits_{\br \in \C^{L^2 \times 1}} \lA \by_S \! - \!\bA_1 \br \rA_2^2 \!+ \! \lA \widehat{\bq}_C\! - \!\bA_2 \br\rA_2^2. \label{form r}
\eeq
The following lemma provides the solution of problem \eqref{form r}.
\begin{lemma} \label{lemma r}
Given the problem below
\beq
\min \limits_{\br \in \C^{L^2 \times 1}} \lA \by_S -\bA_1 \br \rA_2^2 \!+\! \lA \widehat{\bq}_C\!-\!\bA_2 \br\rA_2^2, \nonumber
\eeq
the optimal solution is given by
\beq
\widehat{\br} = (\bA_1^H \bA_1+\bA_2^H \bA_2)^{-1}(\bA_1^H\by_S+\bA_2^H\widehat{\bq}_C). \label{sol r}
\eeq
\end{lemma}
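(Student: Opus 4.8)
The plan is to recognize \eqref{form r} as an ordinary least-squares problem in disguise and solve it through the normal equations. First I would stack the two residual blocks, setting $\bb = [\by_S^T, \widehat{\bq}_C^T]^T$ and $\bA = [\bA_1^T, \bA_2^T]^T$, so that the objective collapses to the single quadratic $\lA \bb - \bA\br \rA_2^2$. Expanding the block products gives $\bA^H\bA = \bA_1^H\bA_1 + \bA_2^H\bA_2$ and $\bA^H\bb = \bA_1^H\by_S + \bA_2^H\widehat{\bq}_C$, because the stacking makes the two data terms additive with no cross coupling. Writing $\bG \triangleq \bA_1^H\bA_1 + \bA_2^H\bA_2$ and $\bp \triangleq \bA_1^H\by_S + \bA_2^H\widehat{\bq}_C$, the problem is now governed entirely by the single Gram matrix $\bG$ and the vector $\bp$.

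Next I would expand the cost as the Hermitian quadratic form $f(\br) = \br^H\bG\br - \br^H\bp - \bp^H\br + c$, where $c = \by_S^H\by_S + \widehat{\bq}_C^H\widehat{\bq}_C$ is independent of $\br$ and $\bG$ is Hermitian positive semidefinite by construction. Completing the square yields $f(\br) = (\br - \bG^{-1}\bp)^H \bG (\br - \bG^{-1}\bp) + c - \bp^H\bG^{-1}\bp$; since the quadratic term is nonnegative and vanishes exactly at $\br = \bG^{-1}\bp$, the unique minimizer is $\widehat{\br} = \bG^{-1}\bp = (\bA_1^H\bA_1 + \bA_2^H\bA_2)^{-1}(\bA_1^H\by_S + \bA_2^H\widehat{\bq}_C)$, which is the claimed \eqref{sol r}. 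An equivalent route is to set the Wirtinger derivative $\partial f/\partial\br^{*} = \bG\br - \bp$ to zero, giving the same normal equation $\bG\widehat{\br} = \bp$.

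The one point that genuinely needs justification --- and thus the main, if modest, obstacle --- is the invertibility of $\bG$ required by the completion of the square. I would settle this from the Kronecker structure of $\bA_1$ and $\bA_2$. Applying $(\bP\otimes\bQ)^H(\bP\otimes\bQ) = (\bP^H\bP)\otimes(\bQ^H\bQ)$ together with the semi-unitarity $\widehat{\bW}^H\widehat{\bW} = \bI_L$ collapses the Gram terms to $\bA_1^H\bA_1 = \bC_1\otimes\bI_L$ and $\bA_2^H\bA_2 = \bC_2\otimes\bI_L$, where $\bC_1,\bC_2$ are the Hermitian blocks built from the first $m$ and the remaining $N_t-m$ columns of $\widehat{\bF}^H$. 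Since those columns partition $\widehat{\bF}^H$, their contributions sum to $\bC_1 + \bC_2 = (\widehat{\bF}^H\widehat{\bF})^{*} = \bI_L$, so that $\bG = \bI_{L^2}$ is positive definite and its inverse exists trivially. With invertibility in hand the square-completion is exact, the minimizer is unique, and the lemma follows; everything after the stacking step is routine linear algebra.
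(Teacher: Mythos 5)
Your proof is correct, and its core step---reducing the objective to the normal equation $(\bA_1^H \bA_1+\bA_2^H \bA_2)\br = \bA_1^H\by_S+\bA_2^H\widehat{\bq}_C$---is the same as the paper's, which simply observes that the problem is convex and sets the first-order derivative to zero. Where you genuinely go beyond the paper is the invertibility of the Gram matrix: the lemma's statement \eqref{sol r} writes $(\bA_1^H \bA_1+\bA_2^H \bA_2)^{-1}$, yet the paper's proof never justifies that this inverse exists, so strictly speaking it has a gap that your argument closes. Your Kronecker computation is right: using $(\bP\otimes\bQ)^H(\bP\otimes\bQ)=(\bP^H\bP)\otimes(\bQ^H\bQ)$ together with the semi-unitarity $\widehat{\bW}^H\widehat{\bW}=\bI_L$ and $\widehat{\bF}^H\widehat{\bF}=\bI_L$ established in Sections III-A and III-B, the two Gram blocks are $\bC_1\otimes\bI_L$ and $\bC_2\otimes\bI_L$ with $\bC_1+\bC_2=(\widehat{\bF}^H\widehat{\bF})^{*}=\bI_L$, hence $\bG=\bI_{L^2}$. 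This buys two things the paper's terse argument does not: the minimizer is rigorously unique (the quadratic is strictly convex), and the inverse in \eqref{sol r} is in fact the identity, so the estimate collapses to $\widehat{\br}=\bA_1^H\by_S+\bA_2^H\widehat{\bq}_C$ with no matrix inversion required at all---an observation that actually strengthens the paper's computational-complexity claim for this step. The packaging difference (stacking the two residuals and completing the square versus differentiating the sum directly) is cosmetic; the invertibility argument is the substantive addition.
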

\begin{proof}
The problem is convex with respect to $\br$. Thus, the optimal solution can be obtained by setting the first order derivative of the objective function to zero as
\beq
\bA_1^H(\bA_1\br -\by_S) + \bA_2^H(\bA_2\br - \widehat{\bq}_C) =\mathbf{0}. \label{r equation}
\eeq
The solution of \eqref{r equation} is exactly the result in \eqref{sol r}, which concludes the proof.\hfill \qed
\end{proof}

It is worth noting that after we have obtained the column and row subspace estimates, i.e., $\widehat{\bW}$ and $\widehat{\bF}$, the channel estimation is simply to compute $\widehat{\br}=\vec(\widehat{\bR})$ in \eqref{sol r}. Since the dimension of $\widehat{\bR}$ is much lower than that of $\bH$, the channel estimation complexity is substantially reduced as shown in Lemma \ref{lemma r}.

\vspace{-2pt}
\section{Discussion of Algorithm}
In this section, we analyze the complexity of the proposed SASE method in terms of the  channel use overhead and computational complexity. Moreover, we discuss the application of the SASE in other channel scenarios.

\vspace{-2pt}
\subsection{Channel Use Overhead}
 \begin{table}
		\centering
\caption{Channel Uses of Algorithms}
		\begin{tabular}{|c|c|}\hline
			Algorithms&Number of Channel Uses\\\hline
			SASE&${m N_r}/{M_{RF}} + (N_t-m)$\\
			MF \cite{AlternatingMin}&$\mathcal{O}(L(N_r+N_t)/{M_{RF}})$\\
			SD \cite{ZhangSD} &$\mathcal{O}(L(N_r+N_t)/{M_{RF}})$\\
			Arnoldi \cite{hadi2015}  &$2qN_r/{M_{RF}}+2qN_t/{N_{RF}}$\\
			OMP \cite{OMPchannel} &  $\mathcal{O}(L\ln (G^2)/{M_{RF}})$ \\
            SBL \cite{SBR_channel} &  $\mathcal{O}(L\ln (G^2)/{M_{RF}})$ \\
			ACE \cite{alk} &$s^2L^3\log_s (N_m/L)/{M_{RF}}$\\ \hline
		\end{tabular}
		\label{table use}
	\end{table}

Considering the channel uses in each stage, the total number of channel uses for the SASE is given by
\vspace{-2pt}
\beq
 K_{\text{SASE}} &=&  {m N_r}/{M_{RF}} + (N_t-m). \label{channel uses}
\eeq
Therefore, the number of channel uses grows linearly with the channel dimension, i.e., $\mathcal{O}(N_t)$.
In particular, when we let $m=L$, the number of channel uses in \eqref{channel uses} will be ${L N_r}/{M_{RF}} + (N_t-L)$. Considering that each channel use contributes to $M_{RF}$ observations in the first stage, and $L$ observations in the second stage, the total number of the observations is $LN_r + L(N_t-L)$, which is equivalent to the degrees of freedom of $\rank$-$L$ matrix $\bH \in \C^{N_r \times N_t}$ \cite{matrixCom}.

The numbers of channel uses of the proposed SASE and other benchmarks \cite{OMPchannel,SBR_channel,alk,ZhangSD,AlternatingMin,hadi2015} are compared in Table \ref{table use}. For the angle estimation methods in \cite{OMPchannel,SBR_channel,alk}, the number of required channel uses for the OMP \cite{OMPchannel} and SBL \cite{SBR_channel} is $K_\text{OMP}=K_\text{SBL} = \mathcal{O}(L\ln (G^2)/{M_{RF}})$, where $G$ is the number of grids with $G \ge \max\{N_r,N_t \}$.
The number of channel uses for adaptive channel estimation (ACE) \cite{alk} is $K_\text{ACE}=s^2L^3\log_s (N_m/L)/{M_{RF}}$, where $2\pi/N_m$ with $N_m \ge \max\{N_r,N_t \}$ is the desired angle resolution for the ACE, and $s$ is the number of beamforming vectors in each stage of the ACE.
For the subspace estimation methods in \cite{ZhangSD,AlternatingMin,hadi2015},  the numbers of required channel uses for subspace decomposition (SD) \cite{ZhangSD} and matrix factorization (MF) \cite{AlternatingMin} are $K_\text{SD}=K_\text{MF}=\mathcal{O}(L(N_r+N_t)/{M_{RF}})$,
while it requires $K_\text{Arnoldi} =2qN_r/{M_{RF}}+2qN_t/{N_{RF}}$ channel uses where $q\ge L$ for Arnoldi approach \cite{hadi2015}.
Because the number of estimated parameters of the angle estimation methods such as OMP, SBL, and ACE, is less than that of the proposed SASE, they require slightly fewer channel uses than SASE. Nevertheless, the proposed SASE consumes fewer channel uses than those of  the existing subspace estimation methods \cite{ZhangSD,AlternatingMin,hadi2015} as shown in Table \ref{table use}.

\vspace{-2pt}
\subsection{Computational Complexity}

For the proposed SASE, the computational complexity of the first stage comes from the SVD of ${\bY}_S$, which is $\mathcal{O}(m^2 N_r )$ [28].
The complexity
of the second stage is dominated by the design of $\widehat{\bW}$ in \eqref{receiver sounder}, which is $\mathcal{O}( L D N_r )$, where $D\ge   N_r$ denotes the cardinality of an over-complete
dictionary.
Hence, the overall complexity of the proposed SASE algorithm is $\mathcal{O}(m^2 N_r + L D N_r ) = \mathcal{O}(L D N_r )$.
The computational complexities of benchmarks, i.e., the angle estimation methods OMP \cite{OMPchannel}, SBL \cite{SBR_channel}, and ACE \cite{alk} along with the subspace estimation methods Arnoldi \cite{hadi2015}, SD \cite{ZhangSD}, and MF \cite{AlternatingMin} are compared in Table II, where $K$ denotes the number of channel uses.
For a fair comparison, when comparing the computational complexity, we assume the number of channel uses, $K$, is equal among the benchmarks.
As we can see from Table \ref{table com}, the proposed SASE has the lowest computational complexity.



 \begin{table}
		\centering
\caption{Computational Complexity of Algorithms}
		\begin{tabular}{|c|c|}\hline
			Algorithms&Computational Complexity\\\hline
			SASE&$\mathcal{O}(L D N_r )$\\
			MF \cite{AlternatingMin} &$\mathcal{O}(K M_{RF} L^2(N_r^2+N_t^2))$\\
			SD \cite{ZhangSD}&$\mathcal{O}(K M_{RF} L^2(N_r^2+N_t^2))$\\
			Arnoldi \cite{hadi2015} &$\mathcal{O}(K^2M_{RF}^2/(N_r+N_t))$\\
			OMP \cite{OMPchannel}&$\mathcal{O}(L KM_{RF}G^2)$ \\
            SBL \cite{SBR_channel}&$\mathcal{O}(G^6)$ \\
			ACE \cite{alk}&$\mathcal{O} (K M_{RF}^2 D N_r/(sL) +K N_{RF}^2 D N_t/(sL) )$\\ \hline
		\end{tabular}
		\label{table com}
	\end{table}

\vspace{-2pt}
\subsection{Extension of SASE} \label{extension sec}

In this subsection, we extend the proposed SASE to the 2D mmWave channel model with UPAs.
There are $N_{cl}$ clusters, and
each of cluster is composed of $N_{ray}$ rays. For this model, the mmWave channel matrix is expressed as \cite{BaiChannel,IntroductionM,spatially}
\beq
\bH = \sqrt{\frac{N_r N_t}{N_{cl} N_{ray}}}\sum_{i=1}^{N_{cl}} \sum_{j=1}^{N_{ray}} h_{ij} \ba_r(\phi^r_{ij}, \theta^r_{ij}) \ba_t^H(\phi^t_{ij},\theta^t_{ij}), \label{channel model SC}
\eeq
where $h_{ij}$ represents the complex gain associated with
the $j$th path of the $i$th cluster.
 The $\ba_r(\phi^r_{ij}, \theta^r_{ij}) \in \C^{N_r \times 1}$ and $ \ba_t(\phi^t_{ij},\theta^t_{ij}) \in \C^{N_t \times 1}$ are the receive and transmit array response vectors, where $\phi^r_{ij} (\phi^t_{ij})$ and $ \theta^r_{ij}(\theta^t_{ij})$ denote the azimuth and elevation angles of the receiver (transmitter). Specifically, the $\ba_r(\phi^r_{ij}, \theta^r_{ij}) $ and $ \ba_t(\phi^t_{ij},\theta^t_{ij}) $ are expressed as
 \beq
 \ba_r(\phi^r_{ij}, \theta^r_{ij}) = \frac{1}{\sqrt{N_r}}[1,\cdots, e^{j\frac{2\pi}{\lambda}d(m_r \sin \phi^r_{ij}  \sin \theta^r_{ij}  + n_r \cos \theta^r_{ij}  )} ,\nonumber \\
 \cdots,e^{j\frac{2\pi}{\lambda}d((\sqrt{N_r}-1)\sin \phi^r_{ij}  \sin \theta^r_{ij}  + (\sqrt{N_r}-1) \cos \theta^r_{ij}  )}], \nonumber \\
  \ba_t(\phi^t_{ij}, \theta^t_{ij}) = \frac{1}{\sqrt{N_t}}[1,\cdots, e^{j\frac{2\pi}{\lambda}d(m_t \sin \phi^t_{ij}  \sin \theta^t_{ij}  + n_t \cos \theta^t_{ij}  )}, \nonumber \\
 \cdots,e^{j\frac{2\pi}{\lambda}d((\sqrt{N_t}-1)\sin \phi^t_{ij}  \sin \theta^t_{ij}  + (\sqrt{N_t}-1) \cos \theta^t_{ij}  )}], \nonumber
 \eeq
 where $d$ and $\lambda$ are the antenna spacing and the wavelength, respectively, $0 \le m_r, n_r <\sqrt{N_r}$ and $0 \le m_t, n_t < \sqrt{N_t}$ are the antenna indices in the 2D plane.

For the channel model in \eqref{channel model SC}, it is worth noting that the rank of $\bH$ is at most $N_{cl} N_{ray}$. Using the similar derivations as the proof of Lemma \ref{lemma1}, we can verify that  when $m\ge N_{cl} N_{ray}$, the sub-matrix $\bH_S=[\bH]_{:,1:m}\in \C^{N_r \times m}$ satisfies $\rank(\bH_S) = \rank(\bH)$.
Therefore, it is possible to sample the first $m$ columns of $\bH$ in \eqref{channel model SC}
to obtain column subspace information, and sample the remaining columns to obtain the row subspace information.
This means that the proposed SASE can be extended directly to the channel model given in \eqref{channel model SC}.

In summary, the proposed SASE has no strict limitations to be applied to other channel models if the channel matrix $\bH$ experiences sparse propagation and $\text{col}(\bH_S) = \text{col}(\bH)$. Moreover, because the proposed SASE is an open-loop framework, it can be easily extended to multiuser MIMO downlink scenarios.

\vspace{-4pt}
\section{Simulation Results}

In this section, we evaluate the performance of the proposed SASE algorithm by simulation.
\vspace{-4pt}
\subsection{Simulation Setup}

In the simulation, we consider the numbers of the receive and transmit antennas are $N_r = 36$, and $N_t=144$, respectively, and the numbers of the RF chains at the receiver and transmitter are $M_{RF} =6$ and $N_{RF} =8$, respectively. Without lose of generality, it is assumed that the variance of the complex gain of the $l$th path is $\sigma_{h,l}^2=1,\forall l$. We consider three subspace-based channel estimation methods as the benchmarks, i.e., SD \cite{ZhangSD} and MF \cite{AlternatingMin}, and {Arnoldi \cite{hadi2015}}, where SD and MF aim to recover the low-rank mmWave channel matrix, and Arnoldi is to estimate the dominant singular subspaces of the mmWave channel.
For a fair comparison, the considered benchmarks are to estimate the subspace rather than the parameters such as the angles of the paths.

\vspace{-5pt}
\subsection{Numerical Results}

\begin{figure}
\centering
\includegraphics[width=3.3in, height=2.2in]{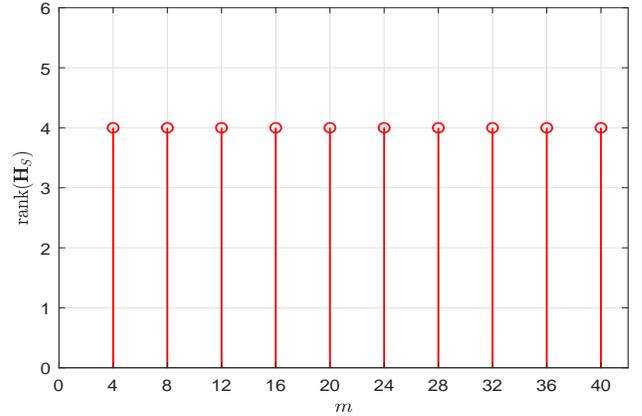}
\caption{$\rank(\bH_S)$ versus $m$ ($N_t = 144; N_r = 36; L = 4$)} \label{figure_rank}
\end{figure}

In order to evaluate the subspace accuracy of different methods, we compute the subspace accuracy $\eta(\widehat{\bW},\widehat{\bF})$ in \eqref{subspace metric}, column subspace accuracy $\eta_c(\widehat{\bW})$ in \eqref{def eta c}, and row subspace accuracy $\eta_r(\widehat{\bF})$ in \eqref{def eta r} for comparison.
We also evaluate the  normalized mean squared error (NMSE) and spectrum efficiency.
The NMSE is defined as
$\text{NMSE}=\E[{\|\bH - \widehat{\bH} \|_F^2}/{\|\bH  \|_F^2}]$,
where $\widehat{\bH}$ denotes the channel estimate. In particular, the channel estimate of the SASE is obtained by the method derived in Section \ref{SASE channel result}.
The spectrum efficiency in \eqref{spectrum efficiency f} is calculated with the combiner $\widehat{\bW}$ and precoder $\widehat{\bF}$, which are designed according to the precoding design techniques provided in \cite{spatially} with the obtained channel estimate $\widehat{\bH}$ via channel estimation.

\begin{figure}[t]
\centering
\small
\setlength{\abovecaptionskip}{0cm}
\begin{tabular}{@{\hskip0pt}c@{\hskip0pt}}
\epsfig{figure=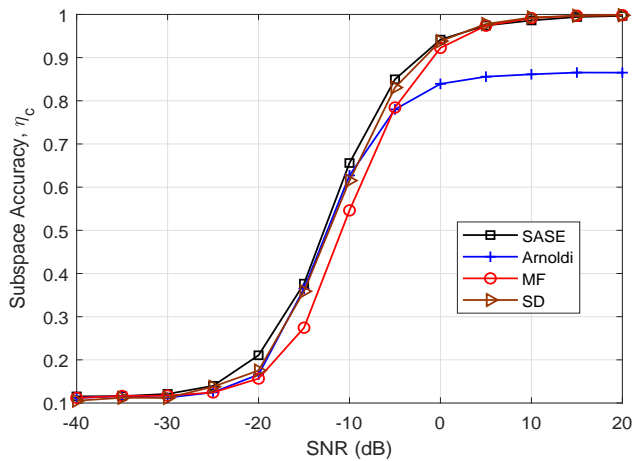,width=3.3in, height=2.4in}\\
(a)\\
\epsfig{figure=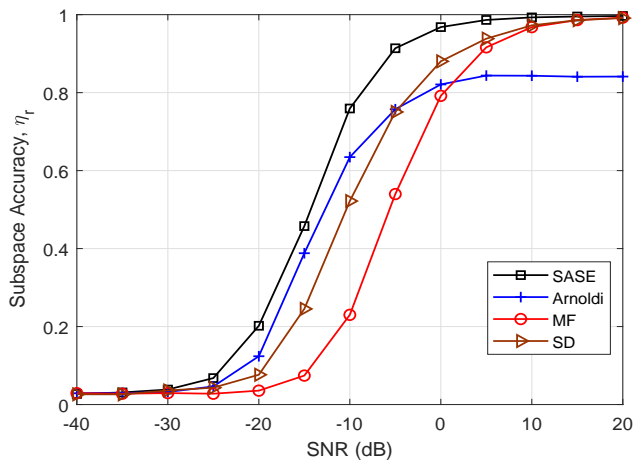,width=3.3in, height=2.4in}\\
 (b)\\
\epsfig{figure=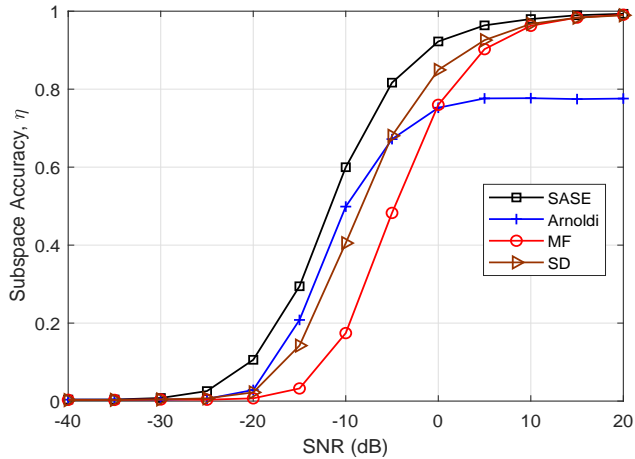,width=3.3in, height=2.4in} \\
(c)\\
\end{tabular}
\caption{The subspace accuracy versus SNR (dB) when $N_t = 144, N_r = 36, L=4, M_{RF} =6, N_{RF} =8,K=244$: (a) Column subspace accuracy $\eta_c$, (b) Row subspace accuracy $\eta_r$, (c) Subspace accuracy $\eta$.
}\label{SNR performance}
\end{figure}

\subsubsection{Equivalence of Subspace}

It is worth noting that the column subspace estimation in Section \ref{Section_column} depends on the fact of subspace equivalence between $\bH_S $ and $\bH$ in \eqref{colun observation}.
We illustrate in Fig. \ref{figure_rank} the rank of $\bH_S$ with different $m$. In this simulation, we set $L=4$ and $m=\{1L,2L,\ldots,10L\}$. It can be seen in Fig. \ref{figure_rank} that the rank of $\bH_S$ is equal to $L$ for all the values of $m$, i.e., the rank of $\bH_S$ is equal to the rank of $\bH$, for $m\ge L$. This validates the fact that $\text{col}(\bH_S)=\text{col}(\bH)$.

\subsubsection{Performance versus Signal-to-Noise Ratio}
In Fig. \ref{SNR performance} and Fig. \ref{SNR performance NMSE},
we compare the performance versus SNR of the proposed SASE algorithm to SD, MF and {Arnoldi} methods.
The number of paths is set as $L=4$.
For a fair comparison, the numbers of channel uses for the benchmarks are
kept approximately equal, i.e., $K=244$.

\begin{figure}[t]
\centering
\small
\setlength{\abovecaptionskip}{0cm}
\begin{tabular}{@{\hskip0pt}c@{\hskip0pt}}
\epsfig{figure=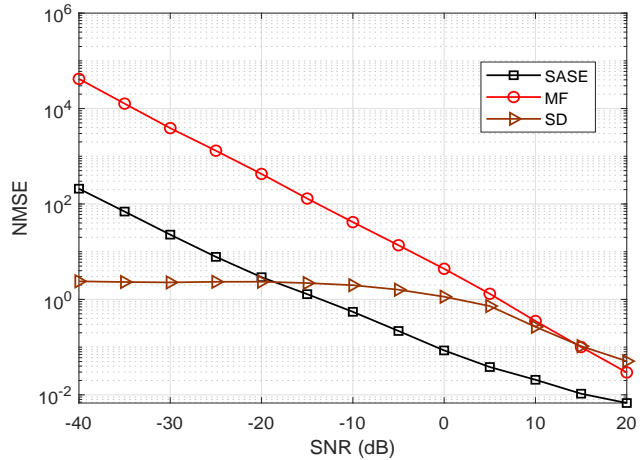,width=3.3in, height=2.4in}\\
(a)\\
\epsfig{figure=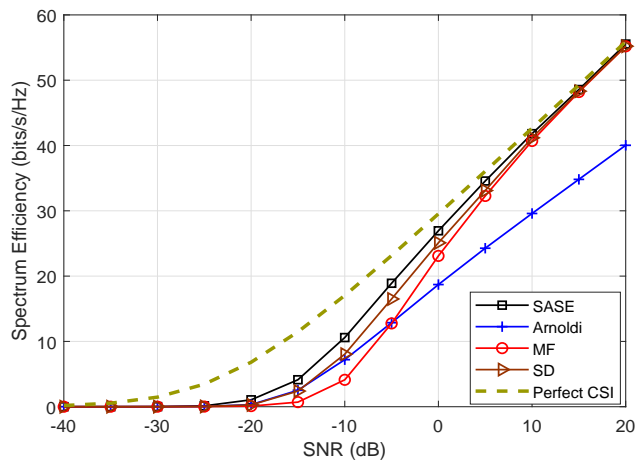,width=3.3in, height=2.4in}\\
 (b)\\
\end{tabular}
\caption{The channel estimation performance versus SNR (dB) when $N_t = 144, N_r = 36, L=4, M_{RF} =6, N_{RF} =8,K=244$: (a) NMSE, (b) Spectrum efficiency.
}\label{SNR performance NMSE}
\end{figure}

In Fig. \ref{SNR performance}(a), the column subspace accuracy $\eta_c$ of the proposed SASE is compared with the benchmarks. As we can see, the SASE and SD methods obtain nearly similar column subspace accuracy, and they outperform over the MF and Arnoldi.
It means that sampling the sub-matrix $\bH_S$ of the channel $\bH$ can provide a robust column subspace estimation.
In Fig. \ref{SNR performance}(b), the row subspace accuracy $\eta_r$ versus SNR is plotted. We found that the proposed SASE outperforms over the others. It verifies that adapting the receiver sounders to the column subspace can surely improve the accuracy of row subspace estimation.
In Fig. \ref{SNR performance}(c), the subspace accuracy $\eta$ defined in \eqref{subspace metric} of the proposed SASE is evaluated. As can be seen that the proposed SASE achieves the most accurate subspace estimation over the other methods. For the SASE, MF and SD, the nearly optimal subspace estimation, i.e., $\eta \approx 1$, can be achieved in the high SNR region ($10 \text{dB}\sim 20 \text{dB}$).
Since the performance of the Arnoldi highly depends on the number of available channel uses, its accuracy is degraded and saturated at high SNR due to the limited channel uses ($K=244$). Thus, the ideal performance of the Arnoldi relies on a large number of channel uses or enough RF chains \cite{hadi2015}.

In Fig. \ref{SNR performance NMSE}(a), the NMSE of the proposed SASE is decreased as the SNR increases. It has similar characteristis as that of the MF, but has much lower value. The NMSE of the SD is almost constant in the low SNR region and decreases in higher SNR region. Overall, the SASE outperforms the SD when $\text{SNR}\ge -15 \text{dB}$. In Fig. \ref{SNR performance NMSE}(b), the spectral efficiency of the SASE is plotted. The curve for perfect CSI with fully digital precoding is plotted for comparison. The proposed SASE achieves the nearly optimal spectrum efficiency among all the methods.
It is observed that the spectrum efficiency of the SASE has a different trend from the NMSE  in Fig. \ref{SNR performance NMSE}(a), while it has similar characteristic as the subspace accuracy in Fig. \ref{SNR performance}(c). The evaluation validates the effectiveness of the SASE in channel estimation to provide good spectrum efficiency.

\begin{figure}[t]
\centering
\small
\setlength{\abovecaptionskip}{0cm}
\begin{tabular}{@{\hskip0pt}c@{\hskip0pt}}
\epsfig{figure=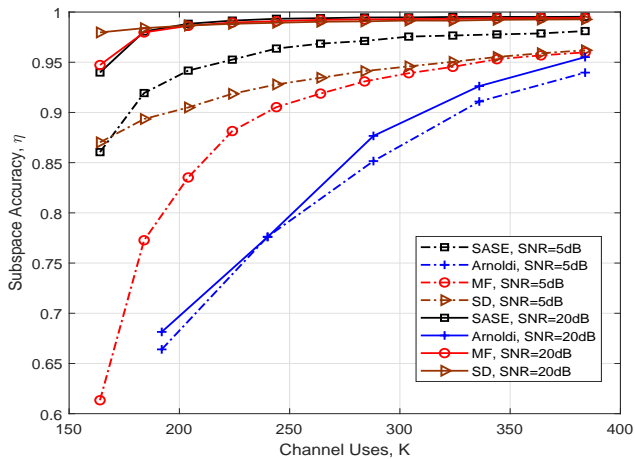,width=3.3in, height=2.4in}\\
(a)  \\
\epsfig{figure=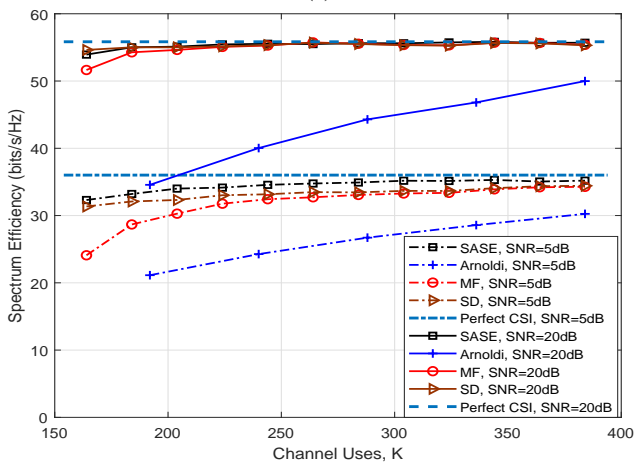,width=3.3in, height=2.4in} \\
(b)  \\
\end{tabular}
\caption{The channel estimation performance versus the number of channel uses $K$ when $N_t = 144, N_r = 36, L=4, M_{RF} =6, N_{RF} =8, \text{SNR}=5\text{dB},20\text{dB}$: (a) Subspace accuracy $\eta$, (b) Spectrum efficiency.
}\label{Uses performance}
\end{figure}

\subsubsection{Performance versus Number of Channel Uses}
In Fig. \ref{Uses performance}, we show the channel estimation performance of the SASE for different numbers of channel uses.  The simulation setting is $L=4, \text{SNR}=5, 20\text{dB}$. The value of $m$ in \eqref{channel uses} is in the set of $\{4,8,\cdots,48\}$. Accordingly, the set of the numbers of channel uses is $K=\{164,184,\cdots,384\}$.

\begin{figure}[t]
\centering
\small
\setlength{\abovecaptionskip}{0cm}
\begin{tabular}{@{\hskip0pt}c@{\hskip0pt}}
\epsfig{figure=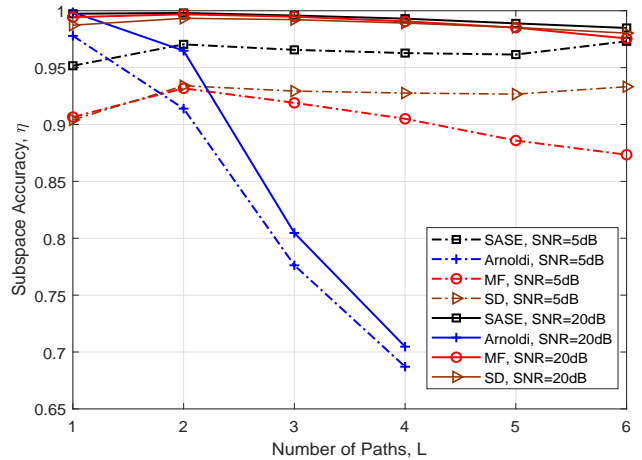,width=3.3in, height=2.4in}\\
(a)  \\
\epsfig{figure=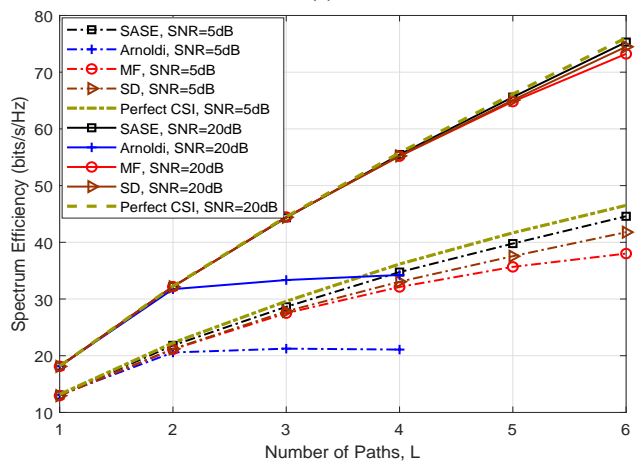,width=3.3in, height=2.4in} \\
(b)  \\
\end{tabular}
\caption{The channel estimation performance versus the number of paths $L$ when $N_t = 144, N_r = 36, M_{RF} =6, N_{RF} =8, K=244 ,\text{SNR}=5\text{dB},20\text{dB}$: (a) Subspace accuracy $\eta$, (b) Spectrum efficiency.
}\label{Paths performance}
\end{figure}

Fig. \ref{Uses performance}(a) shows the subspace estimation performance versus the number of channel uses.
As the number of channel uses increases, the subspace accuracy of all the methods is increased monotonically.
It is worth noting that when $K=164$ ($m=4$), the subspace accuracy of the SASE is slightly lower than that of the SD.
This is because  there are only $m=L=4$ columns sampled for column subspace estimation that affects the column subspace accuracy of the SASE slightly.
Nevertheless, when $m\ge 8$, i.e., $K\ge 184$, the SASE obtains the most accurate subspace estimation, i.e., $\eta\approx 1$, among all the methods.
In particular, when the SNR is moderate, i.e., SNR$=5$dB, the SASE clearly outperforms over the other methods. This means that the SASE requires less channel uses to provide a robust subspace estimation.

Fig. \ref{Uses performance}(b) shows the spectrum efficiency versus the number of channel uses. The curve for perfect CSI with fully digital precoding is also plotted for comparison. Again, the SASE achieves nearly optimal spectrum efficiency compared to the other methods. The performance gap between the SASE and the other methods are more noticeable at SNR$=5$dB. In particular, as seen in the figure, when the number of channel uses, $K\ge 244$, the performance gap between the SASE and perfect curve at SNR$=5$dB is less than $1.5$bits/s/Hz.

\subsubsection{Performance versus Number of Paths}
In Fig. \ref{Paths performance}, we evaluate the estimation performance of the SASE for different numbers of paths, $L$. The number of channel uses is $K=244$ and $\text{SNR}=5,20\text{dB}$. Due to the limited number of channel uses, the Anorldi method can not perform the channel estimation for $L\ge 5$. Thus, we only show the performance of the Arnoldi for $L\le 4$.

In Fig. \ref{Paths performance}(a), the subspace accuracy $\eta$ of different methods versus number of paths, $L$, is illustrated. As we can see, the SASE, SD and MF achieve a more accurate subspace estimation compared to the Arnoldi.
It is seen that the Arnoldi has a sharp decrease in the accuracy for $L>2$. It means that the Arnoldi can provide a good channel estimate only for $L\le 2$ with the use of $K=244$ channel uses.
When SNR$=5$dB, the SASE outperforms over the other methods.
When the SNR is high, i.e., SNR$=20$dB, for the proposed SASE, the subspace accuracy decreases slightly with the number of paths, $L$, which verifies our discussion about the effect of $L$ in Remark 2 of Section III.

\begin{figure}[t]
\centering
\small
\setlength{\abovecaptionskip}{0cm}
\begin{tabular}{@{\hskip0pt}c@{\hskip0pt}}
\epsfig{figure=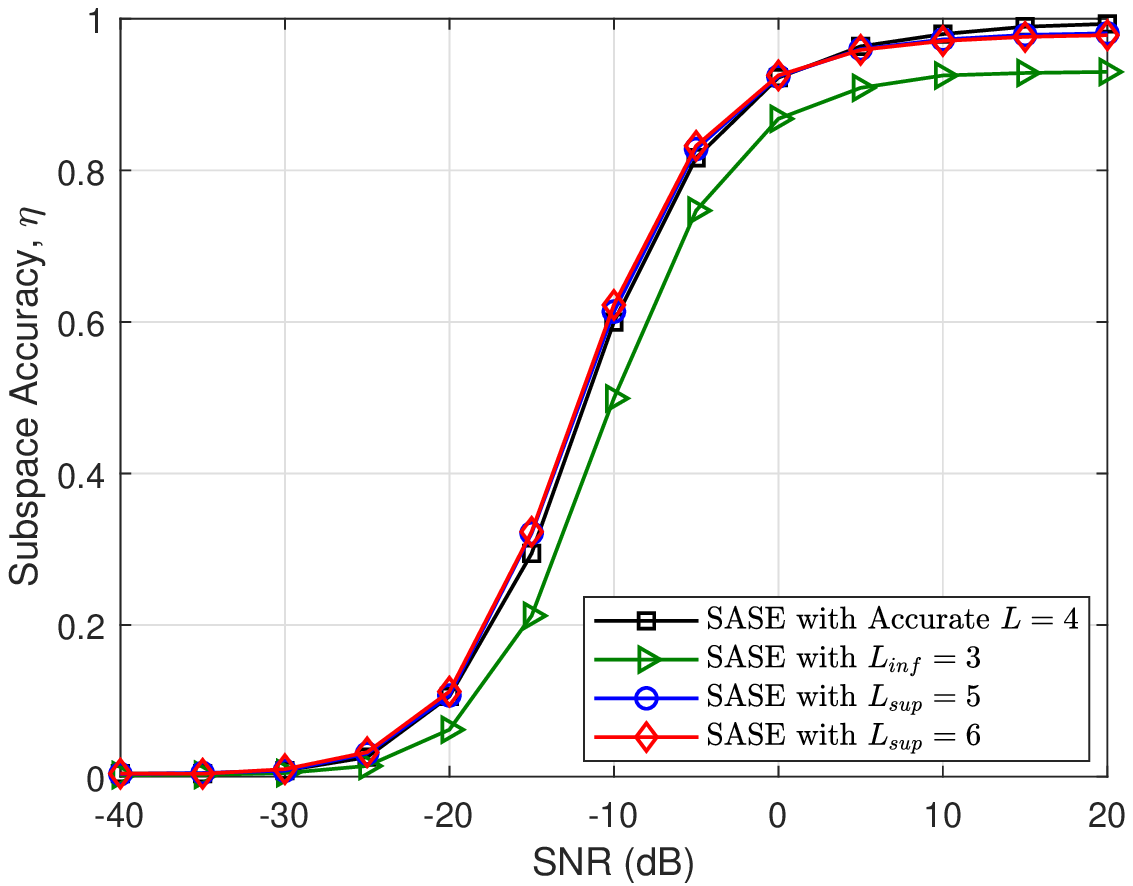,width=3.3in, height=2.4in}\\
(a)\\
\epsfig{figure=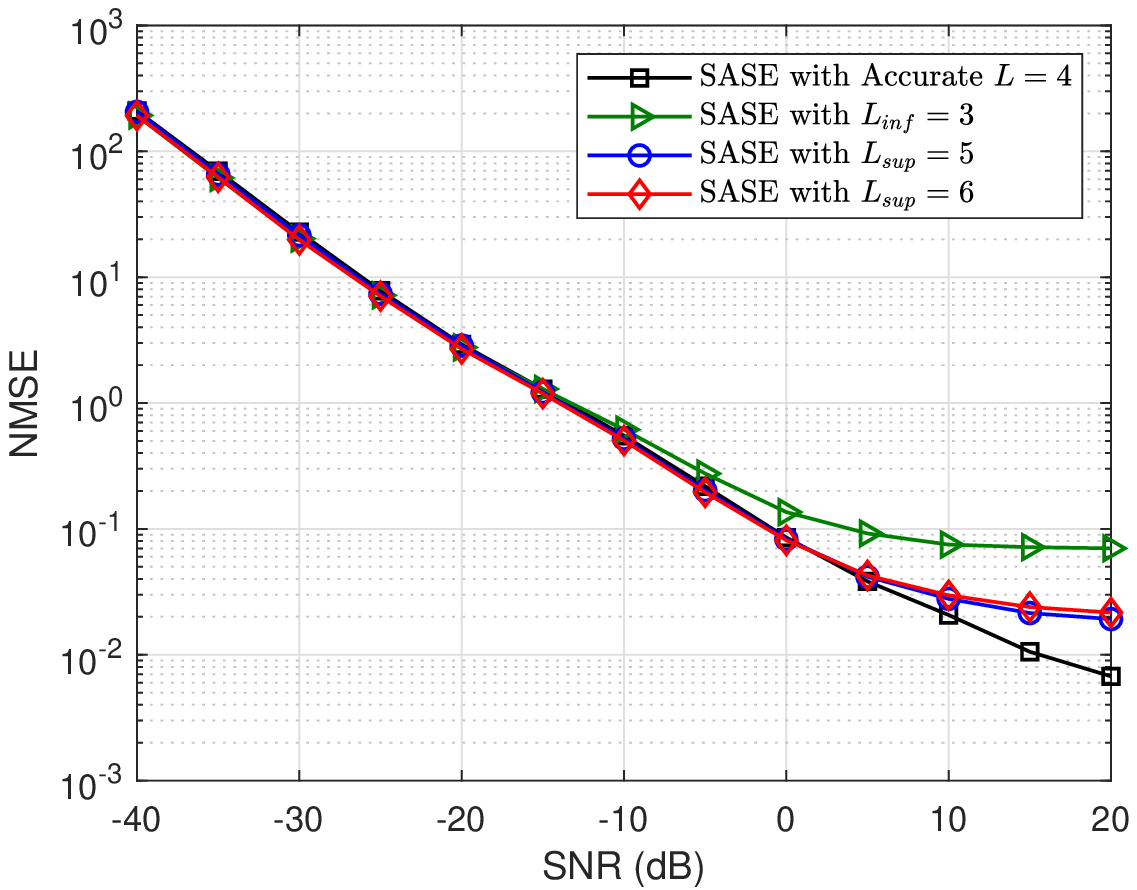,width=3.3in, height=2.4in}\\
 (b)\\
\epsfig{figure=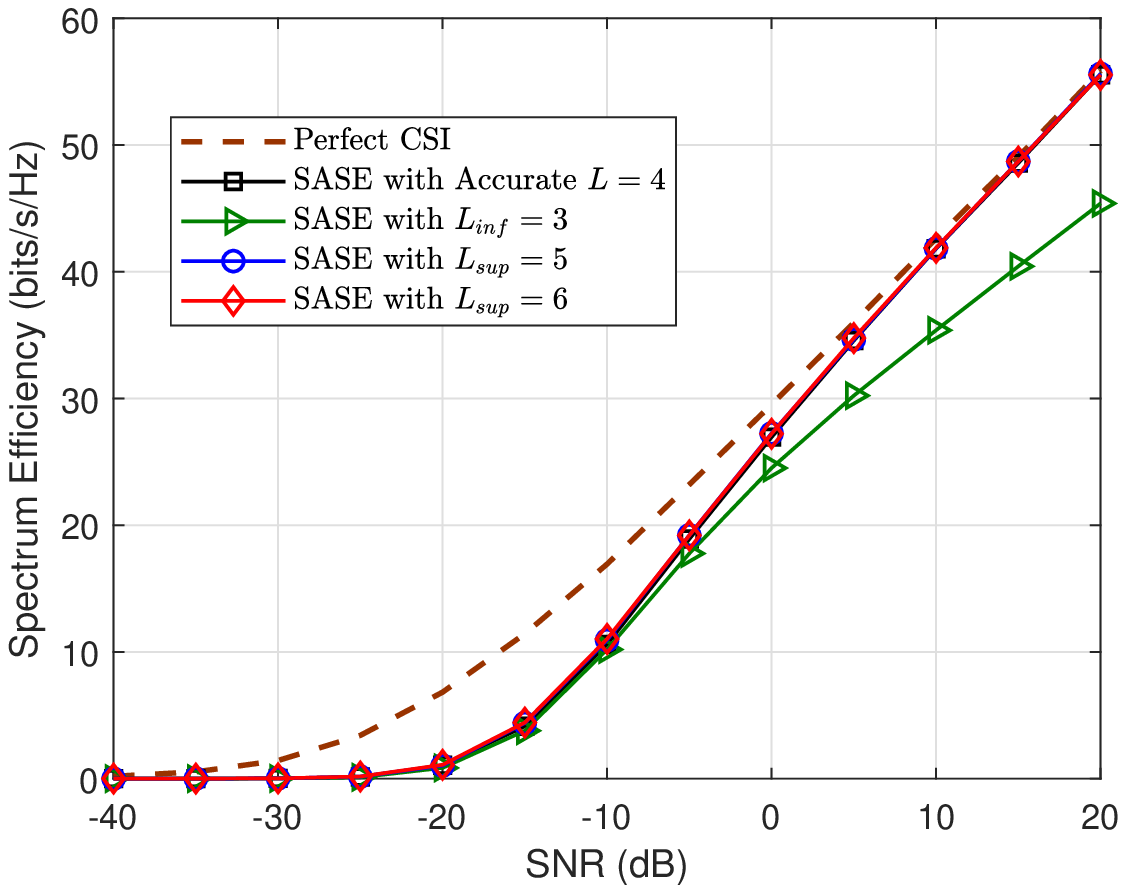,width=3.3in, height=2.4in} \\
(c)\\
\end{tabular}
\caption{The channel estimation performance with inaccurate path information when $N_t = 144; N_r = 36; L = 4; M_{RF} \!=\!6; N_{RF} =8;K=244$: (a) Subspace accuracy $\eta$, (b) NMSE, (c) Spectrum efficiency.
}\label{different_LD}
\end{figure}

In Fig. \ref{Paths performance}(b), the spectrum efficiency versus number of paths, $L$, is shown. Apart from the Arnoldi, the spectrum efficiency achieved by the SASE, MF and SD increases with the number of paths.  When the SNR is high, i.e., SNR$=20$dB, the SASE, MF and SD can achieve nearly optimal performance. When the SNR is moderate, i.e., SNR$=5$dB, the proposed SASE achieves the highest spectrum efficiency among all the methods.
Moreover, for the SASE, MF and SD, their performance gaps with the curve of perfect CSI is getting wider as $L$ increases. Nevertheless, the spectrum efficiency of the SASE is more closer than the other methods, which implies that the SASE can leverage the property of limited number of paths in mmWave channels more effectively than the other methods.

\subsubsection{Performance versus Inaccurate Path information}
Thus far, in the previous simulations, we have assumed the number of paths, $L$, is known a priori. In Fig. \ref{different_LD}, we evaluate the performance of the SASE under the situation that the accurate path information is not available. As discussed in Section III-A, we utilize the upper bound of the number of paths for simplicity, where we let $L_\text{sup}=\{5, 6 \}$ while $L=4$.\footnote{If the $L_\text{sup}$  with $L_\text{sup}\ge L$ is utilized for SASE, the estimated subspaces will be $\widehat{\bW}\in \C^{N_r \times L_\text{sup}}$ and $\widehat{\bF}\in \C^{N_t \times L_\text{sup}}$. For a fair comparison, we choose the dominant $L$ modes in $\widehat{\bW}$ and $\widehat{\bF}$ when evaluating the performance.} For a clear illustration, we also evaluate the performance of proposed SASE by using the lower bound of number of paths, i.e., $L_\text{inf}=3$.
As can be seen in Fig. \ref{different_LD}, compared to the case of $L_\text{inf}=3$,  using the upper bound $L_\text{sup}=\{5,6\}$ for SASE achieves a similar performance as the accurate path information of $L=4$.
In particular, it is noted in Fig. \ref{different_LD}(a) and Fig. \ref{different_LD}(b)  that the estimation performance of $L_\text{sup}$ is slightly worse than that of accurate path information when SNR is high, while it is marginally better when SNR is low.
This is because using inaccurate path information $L_\text{sup}$ with $L_\text{sup} \ge L$ does not affect the column subspace estimation, but according to Proposition \ref{lemma row}, it provides worse row subspace estimation at high SNR  and more accurate row subspace estimation at low SNR.\footnote{It is worth noting that if $L_\text{sup}$ is utilized for SASE, according to Proposition \ref{lemma row}, the row subspace accuracy is bounded as $\mathbb{E}[ \eta_r (\widehat{\bF}) ]  \ge \left ( 1-{2 N_t(\sigma^2 \sig_L^2(\bar{\bQ})+L_\text{sup} \sigma^4)}/{\sig_L^4(\bar{\bQ})} \right)_+$. The statements can be verified easily through analyzing this row subspace accuracy bound.
 }
Nevertheless, in overall, the performance of proposed SASE is not sensitive to the inaccurate path number.

\section{Conclusion}

In this paper, we formulate the mmWave channel estimation as a subspace estimation problem and propose the SASE algorithm. In the SASE algorithm, the channel estimation task is divided into two stages: the first stage is to obtain the column channel subspace, and in the second stage, based on the acquired column subspace, the row subspace is estimated with optimized training signals.
By estimating the column and row subspaces sequentially, the computational complexity of the proposed SASE was reduced substantially to $\mathcal{O}(LDN_r)$ with $D\ge N_r$.
It was analyzed that $\cO(N_t)$ channel uses are sufficient to guarantee subspace estimation accuracy of the proposed SASE.
By simulation, the proposed SASE has better subspace accuracy, lower NMSE, and higher spectrum efficiency than those of the existing subspace methods for practical SNRs.

\appendices
\section{Proof of Lemma \ref{lemma1}} \label{appendix1}
From the mmWave channel model in \eqref{matrix expression}, when
the angles $\{\theta_{t,l} \}_{l=1}^L$ and $\{\theta_{r,l} \}_{l=1}^L$ are distinct,
\beq
\rank(\bA_t) = \rank(\bA_r)=L, \nonumber
\eeq
which holds due to the fact that $\bA_t$ and $\bA_r$ are both Vandermonde matrices. Then, $\bH_S = \bH\bS$ can be expressed as
\beq
\bH_S = \bA_r \diag(\bh) \bA_t^H \bS. \nonumber
\eeq
Combining the rank inequality of matrix product $\rank(\bH_S) \le \rank(\bH) = L$ and the following lower bound,
\beq
 \rank(\bH_S) &\ge& \rank(\bA_r \diag(\bh)) +\rank( \bA_t^H \bS) -L \nonumber \\
 &= &\rank( \bA_t^H \bS), \nonumber
\eeq
yields $L \geq \rank(\bH_S) \geq \rank(\bA_t^H \bS)$.
Therefore, in order to show $\mathop{\mathrm{col}}(\bH_S) = \mathop{\mathrm{col}}(\bH)$, namely, $\rank(\bH_S) = L$, it suffices to show that $\rank(\bA_t^H\bS)=L$.
Considering that $\bA_t^H\bS$ is a Vandermonde matrix, it has $\rank(\bA_t^H\bS)=L$.
This completes the proof. \hfill \qed

\section{Proof of Lemma \ref{lemma2}} \label{lemma2_proof}
It is trivial that the entries in $\bY$ follow the identical distribution of $\cC\cN(0,\sigma^2)$. Therefore, it remains to show that all the entries in $\bY$ are independent. Because of the typical property of Gaussian distribution, it suffices to prove that they are uncorrelated. For any $i \neq j $ or $m \neq n $, the following holds,
  \beq
  \E\left[  [\bY]_{i,m}  [\bY]_{j,n} \right] &= &\E\left[ \bA_{i,:} [\bX]_{:,m} [\bX]_{:,n}^H [\bA]_{j,:}^H \right] \nonumber\\
  &= & 0. \nonumber
  \eeq
  Therefore, the entries in $\bY$ are uncorrelated and thus independent, which concludes the proof. \hfill \qed

\section{Proof of Proposition \ref{prop1}} \label{appendix2}
Based on the definition of $\eta_c(\widehat{\bW})$ in \eqref{def eta c}, it has
\beq
\sqrt{\eta_c(\widehat{\bW})} &=& \sqrt{\frac{\tr( \widehat{\bW}^H \bH \bH^H \widehat{\bW})}{\tr(\bH^H \bH)}} \nonumber\\
&=& \frac{\| (\widehat{\bW} -\widehat{\bU} + \widehat{\bU})^H\bH \|_F}{\| \bH \|_F} \nonumber\\
 &\overset{(a)}{\ge} &\frac{\| \widehat{\bU} ^H\bH \|_F}{\| \bH \|_F} - \frac{\| (\widehat{\bW} -\widehat{\bU})^H\bH \|_F}{\| \bH \|_F}  \nonumber \\
 &= &\frac{\| \widehat{\bU} ^H\bU \bSig \bV^H \|_F}{\| \bH \|_F} - \frac{\| (\widehat{\bW} -\widehat{\bU})^H\bH \|_F}{\| \bH \|_F}  \nonumber \\
 &\overset{(b)}{\ge} & \sigma_L(\widehat{\bU}^H \bU) - \|\widehat{\bW} -\widehat{\bU}\|_2,\nonumber\\
 &{\ge} & \sigma_L(\widehat{\bU}^H \bU) -\delta_1, \label{mid eq1 1}
 \eeq
where the inequality $(a)$ holds from the triangle inequality, and the inequality $(b)$ comes from the fact that for $\bA\in \C^{n \times n}$ with $\rank(\bA)=n$ and $\bB\in \C^{n \times k}$, $\| \bA \bB\|_F^2 \geq \sig_n^2(\bA) \| \bB \|_F^2$, where the latter follows by
   $\| \bA \bB\|_F^2  = \sum_{i=1}^{k} \| \bA [\bB]_{:,i} \|_2^2 \ge  \sum_{i=1}^{k} \sigma_{n}^2(\bA) \| [\bB]_{:,i} \|_2^2 =\sigma_{n}^2(\bA)  \| \bB \|_F^2$. Thus, this concludes the proof for the inequality in \eqref{mid eq1}.

   Then, by letting $\delta_1 \rightarrow 0$ in \eqref{mid eq1}, we take expectation of squares of both sides in \eqref{mid eq1 1}, then it has the following
 \beq
\mathbb{E}\left[ \eta_c (\widehat{\bW}) \right] &\ge & \mathbb{E}\left[  \sigma_L^2(\widehat{\bU}^H \bU) \right]\nonumber\\
&\overset{(c)}{\ge} & \!\!\left ( 1-\frac{ 2N_r (\sigma^2 \sig_L^2(\bH_S)+m \sigma^4)}{\sig_L^4(\bH_S)} \right)_+ \!\!, \label{final col ac}
\eeq
where the inequality $(c)$ holds from Theorem \ref{theorem1}, and this concludes the proof.
\hfill \qed

\section{Proof of Proposition \ref{lemma row}} \label{appendix3}
Recall that the row subspace matrix $\widehat{\bV}$ is given by the right singular matrix of $\widehat{\bQ}=\bar{\bQ} + \bar{\bN}$ in \eqref{Q exp}, and  the elements in  $ \bar{ \bN}$ are i.i.d. with each entry being $\cC \cN(0,\sigma^2)$ according to Lemma \ref{lemma2}. Thus, Theorem \ref{theorem1} is applied, which gives
\beq
\mathbb{E}\left[ \sigma_L^2 (\widehat{\bV}^H \bV) \right] \ge \left ( 1-\frac{2 N_t(\sigma^2 \sig_L^2(\bar{\bQ})+L \sigma^4)}{\sig_L^4(\bar{\bQ})} \right)_+. \label{row 2}
\eeq
Then, based on the subspace accuracy metric in \eqref{def eta r}, it has
\beq
\sqrt{\eta_r(\widehat{\bF})} &=& \sqrt{\frac{\tr( \widehat{\bF}^H \bH^H \bH \widehat{\bF})}{\tr(\bH^H \bH)}} \nonumber\\
&=& \frac{\|  \bH (\widehat{\bF} -\widehat{\bV} + \widehat{\bV}) \|_F}{\| \bH \|_F} \nonumber\\
 &{\ge} &\frac{\| \bH \widehat{\bV} \|_F}{\| \bH \|_F} - \frac{\| \bH (\widehat{\bF} -\widehat{\bV})\|_F}{\| \bH \|_F}  \nonumber \\
 &= &\frac{\| \bU \bSig \bV^H  \widehat{\bV} \|_F}{\| \bH \|_F} - \frac{\| \bH  (\widehat{\bF} -\widehat{\bV})\|_F}{\| \bH \|_F}  \nonumber \\
 &{\ge} & \sigma_L(\widehat{\bV}^H \bV) - \|\widehat{\bF} -\widehat{\bV}\|_2,\nonumber\\
 &{\ge} & \sigma_L(\widehat{\bV}^H \bV) -\delta_2. \label{lemma row mid1 1}
 \eeq
Thus, the inequality \eqref{lemma row mid1} is proved. Moreover, under the condition $\delta_2 \! \rightarrow \! 0$, taking expectation of the squares of both sides in \eqref{lemma row mid1 1} yields
\beq
\mathbb{E}\left[ \eta_r (\widehat{\bF}) \right]  &\ge& \mathbb{E}\left[ \sigma_L^2 (\widehat{\bV}^H \bV) \right] \nonumber \\
&\overset{(a)}{\ge}& \left ( 1-\frac{2 N_t(\sigma^2 \sig_L^2(\bar{\bQ})+L \sigma^4)}{\sig_L^4(\bar{\bQ})} \right)_+, \nonumber
\eeq
where the inequality $(a)$ holds from \eqref{row 2}. This concludes the proof for the row estimation accuracy bound in \eqref{coefficient matrix}. \hfill \qed

\bibliographystyle{IEEEtran}

\bibliography{IEEEabrv,reference}

\clearpage

\end{spacing}

\end{document}